\newcommand{\mbs}[1]{\pmb{#1}}
\newcommand{\vect}[1]{{\lowercase{\mbs{#1}}}}
\newcommand{\mat}[1]{{\uppercase{\mbs{#1}}}}
\newcommand{\T}{{\scriptscriptstyle\mathsf{T}}}
\renewcommand{\H}{{\scriptscriptstyle\mathsf{H}}}
\renewcommand{\Re}[1][]{\ifthenelse{\isempty{#1}}{\operatorname{Re}}{\operatorname{Re}\left(#1\right)}}
\renewcommand{\Im}[1][]{\ifthenelse{\isempty{#1}}{\operatorname{Im}}{\operatorname{Im}\left(#1\right)}}
\newcommand{\SNR}{\mathsf{snr}}
\newcommand{\cv}{\vect{c}}
\newcommand{\rv}{\vect{r}}
\newcommand{\vv}{\vect{v}}
\newcommand{\xv}{\vect{x}}
\newcommand{\yv}{\vect{y}}
\newcommand{\Sigmam}{\pmb{\Sigma}}
\newcommand{\Gammam}{\pmb{\Gamma}}
\newcommand{\Lambdam}{\pmb{\Lambda}}
\newcommand{\Am}{\mat{a}}
\newcommand{\Bm}{\mat{b}}
\newcommand{\Cm}{\mat{c}}
\newcommand{\Dm}{\mat{d}}
\newcommand{\Mm}{\mat{M}}
\newcommand{\Qm}{\mat{q}}
\newcommand{\Rm}{\mat{r}}
\newcommand{\Um}{\mat{u}}
\newcommand{\Vm}{\mat{V}}
\newcommand{\Wm}{\mat{w}}
\newcommand{\Xm}{\mat{x}}
\newcommand{\Ym}{\mat{y}}
\newcommand{\Bc}{{\mathcal B}}
\newcommand{\Cc}{{\mathcal C}}
\newcommand{\Nc}{{\mathcal N}}
\newcommand{\Oc}{{\mathcal O}}
\newcommand{\Xc}{{\mathcal X}}
\newcommand{\CC}{\mathbb{C}}
\newcommand{\Id}{\mat{\mathrm{I}}}
\newcommand{\CN}[1][]{\ifthenelse{\isempty{#1}}{\mathcal{N}_{\mathbb{C}}}{\mathcal{N}_{\mathbb{C}}\left(#1\right)}}
\renewcommand{\P}[1][]{\ifthenelse{\isempty{#1}}{\mathbb{P}}{\mathbb{P}\left(#1\right)}}
\newcommand{\E}[1][]{\ifthenelse{\isempty{#1}}{\mathbb{E}}{\mathbb{E}\left[#1\right]}}
\newcommand{\Var}[1][]{\ifthenelse{\isempty{#1}}{\mathsf{Var}}{\mathsf{Var}\left[#1\right]}}
\newcommand{\I}[1][]{\ifthenelse{\isempty{#1}}{\mathbb{I}}{\mathbb{I}\left\{#1\right\}}}
\renewcommand{\det}[1][]{\ifthenelse{\isempty{#1}}{\mathrm{det}}{\mathrm{det}\left(#1\right)}}
\newcommand{\trace}[1][]{\ifthenelse{\isempty{#1}}{\mathrm{tr}}{\mathrm{tr}\left(#1\right)}}
\newcommand{\rank}[1][]{\ifthenelse{\isempty{#1}}{\mathrm{rank}}{\mathrm{rank}\left(#1\right)}}
\newcommand{\diag}[1][]{\ifthenelse{\isempty{#1}}{\mathrm{diag}}{\mathrm{diag}\left(#1\right)}}
\newcommand{\Cov}[1][]{\ifthenelse{\isempty{#1}}{\mathsf{Cov}}{\mathsf{Cov}\left(#1\right)}}
\newcommand{\defeq}{\triangleq}
\newcommand{\eqdef}{\triangleq}
\newtheorem{proposition}{Proposition}
\newtheorem{remark}{Remark}
\newtheorem{lemma}{Lemma}
\renewcommand{\rv}[1]{{\mathrm{#1}}}
\newcommand{\rvVec}[1]{\pmb{\mathrm{#1}}}
\newcommand{\rvMat}[1]{\pmb{\mathsf{#1}}}
\newcounter{enumi_saved}
\pgfplotsset{minor grid style={dotted,gray!25}}
\pgfplotsset{major grid style={dashed,gray!25}}
\renewcommand{\rv}[1]{{\mathsf{#1}}}
\renewcommand{\rvVec}[1]{{\pmb{\mathsf{#1}}}}
\renewcommand{\rvMat}[1]{{\pmb{\mathsf{#1}}}}
\newcommand{\Span}[1][]{\ifthenelse{\isempty{#1}}{{\rm Span}}{{\rm Span}\left(#1\right)}}
\renewcommand{\SNR}{\mathsf{SNR}}
\renewcommand{\defeq}{:=}
\renewcommand{\eqdef}{=:}
\newcommand{\Xcal}{\mathcal{X}}
\newcommand{\LLR}{\rv{L}(\Xm\!\to\!{\Xm'})}
\newcommand{\meanLLR}{\E\big[\rv{L}(\Xm\!\to\!{\Xm'})\big]}
\newcommand{\varLLR}{\Var\big[\rv{L}(\Xm\!\to\!{\Xm'})\big]}
\title{Joint Constellation Design for the Two-User Non-Coherent Multiple-Access Channel} 
\author{\IEEEauthorblockN{Khac-Hoang Ngo\IEEEauthorrefmark{1}\IEEEauthorrefmark{2}, Sheng Yang\IEEEauthorrefmark{1}, Maxime Guillaud\IEEEauthorrefmark{2}, Alexis Decurninge\IEEEauthorrefmark{2}}
	\IEEEauthorblockA{\IEEEauthorrefmark{1}LSS, CentraleSup\'elec, 91190 Gif-sur-Yvette, France \\
		\IEEEauthorrefmark{2}Mathematical and Algorithmic Sciences Lab, Paris Research Center, Huawei Technologies, \\92100 Boulogne-Billancourt, France \\
		Email: {\{\href{mailto:ngo.khac.hoang@huawei.com}{ngo.khac.hoang}, \href{mailto:maxime.guillaud@huawei.com}{maxime.guillaud}, \href{mailto:alexis.decurninge@huawei.com}{alexis.decurninge}\}@huawei.com, \href{mailto:sheng.yang@centralesupelec.fr}{sheng.yang@centralesupelec.fr}}}
}
\newcommand{\sheng}[1]{{#1}}
\begin{document}
	
\maketitle
\date{\today}
\begin{abstract}
	We consider the joint constellation design problem for the two-user non-coherent multiple-access channel. Based on an analysis on the non-coherent maximum-likelihood~(ML) detection error, we propose novel design criteria so as to minimize the error probability. Based on these criteria, we propose a simple and efficient construction consisting in partitioning a single-user constellation. Numerical results show that our proposed metrics are meaningful, and can be used as objectives to generate constellations through numerical optimization that perform better than other schemes for the same transmission rate and power.
\end{abstract}

 \begin{IEEEkeywords}
 	non-coherent communications, multiple-access channel,
        Grassmannian constellation, ML  detector. 
 \end{IEEEkeywords}

\section{Introduction} \label{sec:introduction}
In \sheng{multiple-input multiple-output}~(MIMO) communications, it is
usually assumed that the channel state information~(CSI) is known or
estimated (typically by sending pilots \sheng{and/or using feedback}),
and then used for \sheng{precoding} at the transmitter and detection at
the receiver. This is \sheng{known as} the {\em coherent} approach. On
the other hand, in the {\em non-coherent} approach, the transmission and
reception mechanisms are designed \sheng{without using the
\emph{a priori} knowledge of the
CSI}~\sheng{\cite{Hochwald2000unitaryspacetime,ZhengTse2002Grassman,Moser}}. 
In this paper, we consider the latter approach for the MIMO
multiple-access channel~(MAC) in block fading, \sheng{that is}, the channel remains unchanged during each coherence block of length~$T \ge 2$ and varies independently between blocks.


In the single-user case \sheng{with isotropic Rayleigh fading}, the optimal input signal at high signal-to-noise ratio~(SNR) is shown to be isotropic and uniformly distributed in the Grassmann manifold on $\CC^T$~\cite{Hochwald2000unitaryspacetime,ZhengTse2002Grassman}. Information is carried in the subspace of the transmitted signal matrix. The intuition behind this is that the signal subspace is not affected by the random fading coefficients. Motivated by this, there has been extensive research on the design of non-coherent constellations as a set of points on the Grassmann manifold. Many of these so-called Grassmannian constellation designs have been proposed, with a common criterion of maximizing the minimum pairwise chordal distance between the symbols~\cite{Gohary2009GrassmanianConstellations,Kammoun2007noncoherentCodes,Hoang2019TWC_Cubesplit}.

\sheng{In the multi-user case, a straightforward extension of the single-user
coherent approach is through the time division multiple access~(TDMA)
strategy, i.e., only one user is active at a time and transmits with an
optimal single-user constellation.}
\sheng{Another} straightforward extension is to \sheng{divide the coherent block into two parts:
1)~the training part in which orthogonal pilots are sent to estimate the
CSI for each user, and 2)~the data transmission part in which different
users communicate simultaneously}~\cite{Murugesan2007optimization}.
Although this approach achieves the optimal degree-of-freedom (DoF)
region in the two-user SIMO \sheng{MAC}~\cite{Hoang2018DoF_MAC}, its optimality in
terms of achievable rate and detection error \sheng{remains unclear}. 
\sheng{In the massive SIMO regime and when 
the channel changes rapidly}, some non-coherent modulation schemes have been proposed based on amplitude shift keying~\cite{Manolakos2016noncoh_Energybased_massiveSIMO}, or differential phase shift keying~\cite{Baeza2018noncoh_SIMO_MAC_DPSK_BICM}. 
In~\cite{HoangAsilomar2018multipleAccess}, a precoding-based
multiple-access scheme allowing efficient detection is proposed, but
offers no performance guarantee. 
To our knowledge, \sheng{there is no simple and effective joint
constellation construction criterion in the literature.}
%

In this work, we consider the two-user MIMO MAC and aim to design the
joint constellation and transmit powers so as to minimize the maximum
likelihood~(ML) symbol detection error. To this end, we analyze the
worst-case pairwise error probability and \sheng{derive} a design metric which
is the minimum \sheng{expected pairwise log-likelihood ratio~(PLLR)}
over the joint constellation. Furthermore, \sheng{from} the dominant
\sheng{term} of the expected PLLR at the high-SNR regime, we
\sheng{obtain a simplified metric that can be used for joint
constellation construction. Specifically, for any given pair of
constellation sizes, we can optimize the proposed metric over the set of
signal matrices.}
\sheng{In the case of symmetrical constellation size, we} propose a simple
construction consisting in partitioning a single-user constellation.
Numerical results show that our proposed metrics are meaningful \sheng{and
effective}, and the resulting constellations outperform a common
pilot-based scheme and the precoding design
in~\cite{HoangAsilomar2018multipleAccess}.

\sheng{In the following, we first present the system model and
formulate the problem. Then, we analyze the detection error probability
and derive the design metric. A new construction in the symmetrical case
is given, followed by some numerical results. The proofs can be found in the 
appendices}.


\textit{Notation:}
Random quantities are denoted with non-italic letters with sans-serif fonts, e.g., a scalar $\rv{x}$, a vector $\rvVec{v}$, and a matrix $\rvMat{M}$. 
Deterministic quantities are denoted 
with italic letters, e.g., a scalar $x$, a vector $\pmb{v}$, and a
matrix $\pmb{M}$.
The Euclidean norm is denoted by $\|\cdot\|$ and the Frobenius norm $\|\cdot\|_F$. The trace, transpose and conjugated transpose of $\pmb{M}$ are denoted $\trace[\Mm]$,
$\pmb{M}^\T$ and $\pmb{M}^\H$, respectively. 
$[n] \defeq \{1,2,\dots,n\}$. The logarithm $\log(\cdot)$
is to base $e$. The Grassmann manifold $G(\mathbb{C}^T,M)$ is defined as the space of $M$-dimensional subspaces in $\mathbb{C}^T$. 
In particular, $G(\mathbb{C}^T,1)$ is the Grassmannian of lines.

\section{System Model and Performance Metric} \label{sec:model}
We consider a MIMO MAC consisting of a receiver equipped with $N$ antennas and two users, user $k$ with $M_k$ antennas, $k = 1,2.$ The channel is assumed to be flat and block fading with equal-length and synchronous (across the users) coherence interval of length $T \ge 2$. That is, the channel matrix $\rvMat{H}_k \in \CC^{N\times M_k}$ of user~$k$ remains constant within each coherence block of $T$ channel uses and changes independently between blocks.
Furthermore, the {\em distribution} of $\rvMat{H}_k$ is assumed to be known to the receiver, but its {\em realizations} are unknown to both ends of the channel. 
We consider independent and identically distributed~(i.i.d) Rayleigh fading, namely, the rows of $\rvMat{H} \sheng{:=} [\rvMat{H}_1 \ \rvMat{H}_2]$ are independent and follow $\Cc\Nc(\mathbf{0},\Id_{M_1+M_2})$.

Within a coherence block \sheng{$t$,} $t = 1,2,\ldots$, each user $k$
sends a signal matrix\sheng{~symbol} $\rvMat{X}_k \in \CC^{T\times M_k}$, and the
receiver \sheng{observes} 
\begin{equation}
\rvMat{Y}[t] = \rvMat{X}_1[t] \rvMat{H}_1^\T[t] + \rvMat{X}_2[t] \rvMat{H}_2^\T[t] + \rvMat{Z}[t],
\label{eq:channel_model}
\end{equation}
where the additive noise $\rvMat{Z} \in \CC^{T\times N}$ has i.i.d.~$\Cc\Nc(0,1)$ entries independent of $\rvMat{H}_1$ and $\rvMat{H}_2$. 
We consider the power constraint
\sheng{$
	\tfrac{1}{n}\textstyle\sum_{t=1}^{n}\|\rvMat{X}_k[t]\|_F^2 \le P_k T, k =
	1,2,
	$}
where $n$ is the number of blocks spanned by a codeword.
Thus, $P_k$ is the SNR of the transmitted signal of user $k$ at each
receive antenna. For convenience, let us define $P:=\max\{ P_1, P_2
\}$.


We assume that the transmitted symbol $\rvMat{X}_k$ takes value from a {\em finite constellation} $\Xcal_k$ of fixed size $|\Xc_k| = 2^{R_k T}$, where $R_k$ (bits/channel use) is the transmission rate. 
Let us focus on one block and omit the
block index, and rewrite \eqref{eq:channel_model} as  
\begin{equation}
\rvMat{Y} = \rvMat{X} \rvMat{H}^\T + \rvMat{Z} = [
\rvMat{X}_1 \ \rvMat{X}_2] [
\rvMat{H}_1 \ \rvMat{H}_2]^\T + \rvMat{Z},
\end{equation}%
where the concatenated signal matrix $\rvMat{X} := [\rvMat{X}_1 \ \rvMat{X}_2]$ takes value from $\Xcal := \{ [
\Xm_1 \ \Xm_2]:\ \Xm_k\in\Xcal_k \}$. 
Our goal is to derive the
desirable properties of the set couple $(\Xcal_1, \Xcal_2)$ for a given rate
pair $(R_1,R_2)$ to achieve \sheng{low {\em symbol detection error
		probability}}. 
\begin{remark}
	\sheng{In the trivial case where one of the users, say user $2$, has zero
		rate~($R_2 = 0$), the joint constellation design problem boils down to the single-user
		constellation design. }
\end{remark}

The likelihood function $p_{\rvMat{Y} | \rvMat{X}}$ is given by
\begin{equation}
p_{\rvMat{Y} | \rvMat{X}}(\Ym | \Xm) = \frac{\exp(-\trace(\Ym^\H(\Id_T+\Xm\Xm^\H)^{-1}\Ym ))}{\pi^{NT}\det^N(\Id_T+\Xm\Xm^\H)}.
\label{eq:likelihood}
\end{equation}
Therefore, given the received \sheng{symbol} $\rvMat{Y} = \Ym$, the maximum likelihood~(ML) symbol detector is then 
\begin{align} \label{eq:MLdecoder}
\Xi(\Ym) 
&= \arg \max_{\Xm \in \Xc} 
\big(-\trace\big((\Id_T+\Xm\Xm^\H)^{-1}\Ym \Ym^\H \big) \big. \notag \\
&\qquad \qquad \big. - N\log\det(\Id_T+\Xm\Xm^\H)\big).
\end{align} 
We aim to design the constellations $\Xc_1$ and $\Xc_2$ so as to minimize the ML detection error $P_e(\Xcal) \!=\! \P[\Xi(\rvMat{Y}) \!\ne\! \rvMat{X}]$, i.e., 
\begin{equation}
\Xc^* = \arg\max_{\Xc} P_e(\Xcal). \label{eq:criterion_MLerror}
\end{equation}
\sheng{Since} the likelihood function depends on the symbol $\Xm\!$ only
through $\Xm \Xm^\H\!$, \sheng{the following proposition is
	straightforward.}
\begin{proposition}[Identifiability condition] \label{prop:identifiability}
	\sheng{With the ML detector}, the joint constellation $\Xc$ must satisfy $\Xm \Xm^\H \ne {\Xm'} {\Xm'}^\H$ 
	for any pair \sheng{of distinct symbols} $\Xm$ and ${\Xm'}$ in $\Xc$. 
\end{proposition}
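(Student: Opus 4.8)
The plan is to argue the contrapositive: if some pair of distinct symbols $\Xm, {\Xm'} \in \Xc$ satisfies $\Xm\Xm^\H = {\Xm'}{\Xm'}^\H$, then the ML detector \eqref{eq:MLdecoder} cannot drive the detection error to zero, so such a pair cannot appear in a usable constellation. This is essentially a direct reading of the likelihood \eqref{eq:likelihood}.

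First I would note that in \eqref{eq:likelihood} the symbol $\Xm$ enters only through the matrix $\Id_T + \Xm\Xm^\H$, appearing both in the quadratic form in the exponent and in the determinant in the denominator. Hence $p_{\rvMat{Y} | \rvMat{X}}(\Ym | \Xm)$ is a function of $\Xm$ solely through $\Xm\Xm^\H$, and the hypothesis $\Xm\Xm^\H = {\Xm'}{\Xm'}^\H$ immediately yields $p_{\rvMat{Y} | \rvMat{X}}(\Ym | \Xm) = p_{\rvMat{Y} | \rvMat{X}}(\Ym | {\Xm'})$ for every observation $\Ym$. The two symbols therefore induce identical output distributions.

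Next I would translate this into a statement about detection. Because the ML objective in \eqref{eq:MLdecoder} is exactly $\log p_{\rvMat{Y} | \rvMat{X}}(\Ym | \cdot)$ up to an additive constant, $\Xm$ and ${\Xm'}$ always attain the same metric for every $\Ym$, so no resolution of this tie can separate them. Restricting attention to the binary sub-decision between $\Xm$ and ${\Xm'}$, the identity of the two conditional laws forces the two conditional error probabilities to sum to one, whence their average is at least $1/2$. Under the uniform prior over $\Xc$, these two symbols then contribute at least $1/|\Xc|$ to $P_e(\Xcal)$, a floor independent of $P$. Taking the contrapositive establishes that any constellation usable for reliable detection must satisfy $\Xm\Xm^\H \ne {\Xm'}{\Xm'}^\H$ for every distinct pair.

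I expect no genuine obstacle, since the claim is an immediate consequence of the observation preceding it; the likelihood's dependence on $\Xm$ only through $\Xm\Xm^\H$ does all the work. The single point deserving a line of care is the tie-breaking convention in \eqref{eq:MLdecoder}: I would make explicit that identical conditional distributions leave any (deterministic or randomized) tie-break powerless to reduce the pairwise error below $1/2$, which is what pins down the lower bound on $P_e(\Xcal)$ and hence the necessity.
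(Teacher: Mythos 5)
Your proof is correct and follows the paper's own reasoning: the paper declares the proposition ``straightforward'' precisely from the observation that the likelihood \eqref{eq:likelihood} depends on $\Xm$ only through $\Xm\Xm^\H$, which is exactly the engine of your argument. Your elaboration of the resulting error floor (identical conditional laws force the binary pairwise errors to sum to one, giving a $1/|\Xc|$ floor on $P_e(\Xcal)$ immune to any tie-breaking rule) is a sound and slightly more explicit rendering of what the paper leaves implicit.
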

\begin{remark} \label{rem:correlation}
	Although we do not consider correlated fading, we remark that if there is correlation between the antennas of the same user, namely, the rows of $\rvMat{H}$ are independent and follow $\Cc\Nc(\mathbf{0},\Rm)$ with $\Rm \defeq \big[\begin{smallmatrix}
	\Rm_1 & \mathbf{0} \\ \mathbf{0} & \Rm_2
	\end{smallmatrix}\big]$ where $\Rm_k$ is a $M_k \times M_k$ positive definite matrix, the solution to \eqref{eq:criterion_MLerror} can be expressed as $\bar{\Xc}_k \!=\! \{\Xm_k \Rm_k^{-1/2}\!:\! \Xm_k \!\in\! \Xc^*_k\}$ where $\Xc^*_k$, $k \!=\! 1,2$, are the solution to \eqref{eq:criterion_MLerror} for the considered uncorrelated fading but with a new power constraint $\frac{1}{n}\sum_{t=1}^{n}\|\rvMat{X}_k[t] {\Rm}_k^{-1/2}\|_F^2 \!\le\! P_k T,  k \!=\! 1,2.$
\end{remark}

In the next section, we derive more specific design criteria.

\section{Constellation Design Criteria} \label{sec:criteria}
With $\rvMat{X}$ uniformly distributed
in $\!\Xc$, $P_e(\Xcal)\!$ can be written as
\begin{equation} \label{eq:joint_SER}
P_e(\Xcal) = \frac{1}{|\Xcal|}\sum_{\Xm\in\Xcal}\mathbb{P}\left(
\Xi(\rvMat{Y}) \ne \Xm | \rvMat{X} = \Xm\right).
\end{equation}%
Denoting the pairwise error event $\{\Xm\!\to\!{\Xm'}\} \!\defeq\! 
\{ p_{\rvMat{Y} | \rvMat{X}}(\rvMat{Y} | \Xm) \le
 p_{\rvMat{Y} | \rvMat{X}}(\rvMat{Y} | {\Xm'}) | \rvMat{X} = \Xm\}$, 
we have the
following bounds on $P_e(\Xcal)$
\begin{multline}
\frac{1}{|\Xcal|} \max_{\Xm,{\Xm'}\in\Xcal: \Xm\ne{\Xm'}}
\mathbb{P}(\Xm\!\to\!{\Xm'})
\\ \le P_e(\Xcal) 
\le (|\Xcal|-1) \max_{\Xm,{\Xm'}\in\Xcal: \Xm\ne{\Xm'}}
\mathbb{P}(\Xm\!\to\!{\Xm'}). \label{eq:Pe_unionbound}
\end{multline}%
We see that for a given $|\Xcal|$, the symbol
detection error $P_e(\Xc)$ vanishes if and only if the \emph{worst-case
	pairwise error probability~(PEP)}, $\displaystyle\max_{\Xm,{\Xm'}\in\Xcal: \Xm\ne{\Xm'}}
\mathbb{P}(\Xm\!\to\!{\Xm'})$, vanishes. Therefore, {our goal from now on
is to minimize the worst-case PEP.} Let us rewrite the PEP as
$
\mathbb{P}(\Xm\!\to\!{\Xm'}) 
= \mathbb{P}\big(\rv{L}(\Xm\!\to\!{\Xm'}) \le 0 \big)
$ 
with the PLLR  $\rv{L}(\Xm\!\to\!{\Xm'}) \!\defeq\! \log
\frac{p_{\rvMat{Y} | \rvMat{X}}(\rvMat{Y} | \Xm)}{p_{\rvMat{Y} | \rvMat{X}}(\rvMat{Y} | {\Xm'})}$. Using \eqref{eq:likelihood}, we obtain 
\begin{align}
	&\rv{L}(\Xm\!\to\!{\Xm'})
= N\log\frac{\det\big( \Id_T+ {\Xm'}
	{\Xm'}^\H \big)}{\det\left( \Id_T+ \Xm \Xm^\H
	\right)}  \notag \\
&\qquad- \trace\Big( \big(( \Id_T\!+\! \Xm \Xm^\H )^{-1} - ( \Id_T\!+\! {\Xm'} {\Xm'}^\H )^{-1} \big) \rvMat{Y} \rvMat{Y}^\H \Big). \label{eq:LLR0}
\end{align}
Observe that $\LLR$ is a shifted weighted sum of Gamma random variables involving possibly both positive and negative weights. Therefore, $\mathbb{P}(\rv{L}(\Xm\!\to\!{\Xm'}) \le 0 )$ is hard to compute in general. 
We resort to the following bound on the PEP
\begin{align} \label{eq:Cantelli}
\!\!\!\!\!\!\!\!\!\P[\rv{L}(\Xm\!\to\!{\Xm'}) \!\le\! 0] 
\!\le\! 
\frac{\varLLR}{\varLLR \!+\! \meanLLR^2} 
\end{align}
which follows from Cantelli's inequality.
\sheng{Note that the upper bound decreases with $\meanLLR^2/\varLLR$.
We choose to relax the problem into maximizing the expected PLLR
$\meanLLR$.} 
\sheng{Although} maximizing $\!\meanLLR^2\!/\varLLR\!$ and maximizing
$\!\meanLLR\!$ are equivalent only when $\!\varLLR\!$ is constant over different symbol pairs, the
\sheng{relaxation} makes the problem tractable. 

\sheng{We further justify our choice by pointing out the 
connection of our problem to the following hypothesis testing problem. Let us consider
two hypotheses: ${H}_0:\ \{ \yv_i\}_{i=1}^N \sim
\Cc\Nc(\mathbf{0},\Id_T + \Xm \Xm^\H)$ and $H_1:\ \{
\yv_i\}_{i=1}^N \sim \Cc\Nc\big(\mathbf{0},\Id_T + {\Xm'}
{\Xm'}^\H\big)$} where $\{ \yv_i\}_{i=1}^N$ are realizations of $N$ columns of $\rvMat{Y}$. Then, the PEP $\P (\Xm\!\to\!{\Xm'})$ can be seen as the type-1 error probability of the optimal likelihood ratio test. From~\eqref{eq:Cantelli} and the fact that $\meanLLR^2/\varLLR \to \infty$ as $N\to \infty$, we have that $\P({\Xm}\to {\Xm'}) \to 0$ as $N\to \infty$ for any constellation satisfying the identifiability condition in Proposition~\ref{prop:identifiability}. (A proof is given in 
Appendix~\ref{proof:PEPvanishes}.) Switching the symbols' roles, we deduce that $\P({\Xm'}\to {\Xm}) \le \epsilon \in (0,1/2)$ for $N$ large enough. Then, from the Chernoff-Stein Lemma~\cite[Thm.11.8.3]{Cover2006elements}, we have that
\begin{align}
\lefteqn{
	\lim_{N\to \infty} \tfrac{1}{N} \log \P(\Xm\!\to\!{\Xm'})
} \notag \\
&= - D\big(\Cc\Nc(\mathbf{0},\Id_T + \Xm \Xm^\H) \ \| \  \Cc\Nc\big(\mathbf{0},\Id_T + {\Xm'} {\Xm'}^\H\big)\big) \\
&= - \meanLLR,
\end{align}
where $D(\cdot\|\cdot)$ denotes the Kullback-Leibler divergence.
Therefore, in the massive MIMO regime, maximizing $\meanLLR$ maximizes
the pairwise error exponent \sheng{with respect to the number of receive
antennas}. 

\sheng{Therefore}, we consider the following design criterion
\begin{equation}
\Xc^* = \arg\max_{\Xc} \min_{\Xm,{\Xm'}\in\Xcal: \Xm\ne{\Xm'}} \meanLLR. \label{eq:criterion_minMean}
\end{equation}
It follows from \eqref{eq:LLR0} and $\mathbb{E}[\rvMat{Y} \rvMat{Y}^\H] = N \big(\Id_T+ \Xm\Xm^\H)$ that
\begin{multline}
\meanLLR
= N\log\frac{\det\big( \Id_T+  {\Xm'}{\Xm'}^\H
	\big)}{\det\left( \Id_T+  \Xm \Xm^\H \right)} - N \notag \\
\quad+
N \trace\big( ( \Id
+ {\Xm'}{\Xm'}^\H )^{-1} \big) +
N\trace\big( ( \Id_T+ {\Xm'}{\Xm'}^\H )^{-1} \Xm
\Xm^\H \big). 
\end{multline}%
\begin{lemma} \label{lem:meanLLR_scaling}
	Let $\Xm$ and ${\Xm'}$ be such that~(s.t.) $\|\Xm\|_F^2 \!=\!
        \Theta(P)$ and $\|{\Xm'}\|_F^2 \!=\! \Theta(P)$, as
        $P\!\to\!\infty$. We have $\trace\big( ( \Id_T	\!+\! {\Xm'}{\Xm'}^\H )^{-1} \big) \!=\! O(1)$; $\log\frac{\det( \Id_T+ {\Xm'}{\Xm'}^\H
		)}{\det\left( \Id_T+ \Xm  \Xm^\H \right)} = O(1)$ if
                $\Span(\Xm) = \Span({\Xm'})$ and $\Theta(\log P)$ otherwise. Furthermore, $\trace\big( ( \Id_T+ {\Xm'}{\Xm'}^\H )^{-1} \Xm \Xm^\H \big) = O(1)$ if $\Span(\Xm) = \Span({\Xm'})$ and $\Theta(P)$ otherwise.
\end{lemma}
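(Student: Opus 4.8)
The plan is to reduce all three estimates to the spectral decomposition of $\Id_T + {\Xm'}{\Xm'}^\H$ together with the singular value decompositions of $\Xm$ and ${\Xm'}$. I would write the thin SVD ${\Xm'} = \Um'\Sigmam'\Vm'^\H$ with $\Um'\in\CC^{T\times r'}$, $r'=\rank[{\Xm'}]$, and let $\Pm$ denote the orthogonal projector onto $\Span({\Xm'})$, so that $\Id_T-\Pm$ projects onto its orthogonal complement. Since $\|{\Xm'}\|_F^2=\Theta(P)$ and there are at most $M_1+M_2$ nonzero singular values, the largest obeys $\sigma_1'^2=\Theta(P)$; under the design convention that the nonzero singular values are all of order $\sqrt P$ (consistent with constellation points being $\sqrt P$-scaled fixed matrices), every nonzero eigenvalue of ${\Xm'}{\Xm'}^\H$ is $\Theta(P)$. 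The key identity I would exploit is the resolvent splitting
\begin{equation}
(\Id_T + {\Xm'}{\Xm'}^\H)^{-1} = (\Id_T-\Pm) + \Um'(\Id_{r'}+\Sigmam'^2)^{-1}\Um'^\H ,
\end{equation}
where the first summand is a projector and the second has operator norm $\Theta(1/P)$.

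The first estimate is immediate and needs no power hypothesis: ${\Xm'}{\Xm'}^\H\succeq 0$ gives $\Id_T+{\Xm'}{\Xm'}^\H\succeq\Id_T$, hence $(\Id_T+{\Xm'}{\Xm'}^\H)^{-1}\preceq\Id_T$ and $\trace[(\Id_T+{\Xm'}{\Xm'}^\H)^{-1}]\le T = O(1)$. For the third estimate I would substitute the resolvent splitting and separate the trace into two pieces. The piece from $\Um'(\Id_{r'}+\Sigmam'^2)^{-1}\Um'^\H$ is bounded by $\|(\Id_{r'}+\Sigmam'^2)^{-1}\|\,\trace[\Xm\Xm^\H]=O(1/P)\cdot\Theta(P)=O(1)$. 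The remaining piece is $\trace[(\Id_T-\Pm)\Xm\Xm^\H]=\|(\Id_T-\Pm)\Xm\|_F^2$, the squared norm of the component of $\Xm$ orthogonal to $\Span({\Xm'})$. This vanishes exactly when $\Span(\Xm)\subseteq\Span({\Xm'})$, in particular when $\Span(\Xm)=\Span({\Xm'})$, giving $O(1)$; otherwise $\Xm$ carries a $\Theta(\sqrt P)$ component off $\Span({\Xm'})$ and the term is $\Theta(P)$.

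For the second estimate I would pass to restrictions onto the relevant subspaces. Using $\det[\Id_T+\Xm\Xm^\H]=\prod_i(1+\sigma_i^2(\Xm))$ and that each nonzero $\sigma_i^2(\Xm)=\Theta(P)$, one obtains $\log\det[\Id_T+\Xm\Xm^\H]=\rank[\Xm]\log P+O(1)$, and likewise for ${\Xm'}$; hence the log-ratio equals $(\rank[{\Xm'}]-\rank[\Xm])\log P+O(1)$. When $\Span(\Xm)=\Span({\Xm'})$ the ranks coincide and the leading terms cancel. To confirm the cancellation is genuine rather than only at the level of $\log P$ coefficients, I would restrict both $\Xm\Xm^\H$ and ${\Xm'}{\Xm'}^\H$ to the common span and factor out their $\Theta(P)$ eigenvalues, so that each $\log\det$ equals the log-determinant of the restricted matrix plus $o(1)$ and the difference is $O(1)$. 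When the ranks differ, the surviving $(\rank[{\Xm'}]-\rank[\Xm])\log P$ term is $\Theta(\log P)$.

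The main obstacle is the lower-bound (exact $\Theta$) direction rather than the $O(\cdot)$ upper bounds: I must certify that when the span condition fails the leading-order terms do not accidentally cancel. For the third estimate this is clean, since $\|(\Id_T-\Pm)\Xm\|_F^2$ is manifestly $\Theta(P)$ as soon as $\Xm$ has a full-strength component off $\Span({\Xm'})$. For the second estimate the delicate point is the precise correspondence between the span hypothesis and the count of surviving $\log P$ terms: the coefficient is governed by the difference of subspace dimensions, so the argument must tie $\Span(\Xm)=\Span({\Xm'})$ to equality of ranks and verify that the $O(1)$ residuals left after factoring out the $\Theta(P)$ eigenvalues remain bounded. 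Throughout, the single structural input I would lean on is that the nonzero singular values are all of order $\sqrt P$; isolating and justifying that normalization is exactly what upgrades the mere $O$ bounds to the stated $\Theta$ orders.
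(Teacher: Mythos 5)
Your first and third estimates are sound, and for the third you take a genuinely different route from the paper. The paper decomposes $\Xm = \Wm\Dm^\H$, ${\Xm'} = {\Wm'}{\Dm'}^\H$, writes $\Wm = [\Um\ \Vm]$, ${\Wm'} = [\Um\ {\Vm'}]$ with shared eigenmodes and mutually orthogonal complements, then grinds through the push-through identity to isolate $\|\Dm_2\|_F^2$ (for the trace) and a Hoffman--Wielandt perturbation argument on $\Gammam = (\Id_T+{\Xm'}{\Xm'}^\H)(\Id_T+\Xm\Xm^\H)^{-1}$ (for the determinant). Your resolvent splitting $(\Id_T+{\Xm'}{\Xm'}^\H)^{-1} = (\Id_T-\Pm) + \Um'(\Id_{r'}+\Sigmam'^2)^{-1}{\Um'}^\H$ gets the trace claim in three lines, exposes the dichotomy as the vanishing or not of $\|(\Id_T-\Pm)\Xm\|_F^2$, and avoids the paper's shared-eigenmode normal form, which is not fully general anyway (two subspaces need not split into a common part plus mutually orthogonal parts; principal angles can be intermediate). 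Two caveats, both of which the paper shares rather than resolves: the $\Theta(P)$ lower bound needs $\Span[\Xm] \not\subseteq \Span[{\Xm'}]$ --- under strict inclusion $(\Id_T-\Pm)\Xm = \mathbf{0}$ --- which is guaranteed only because both symbols are implicitly of full column rank (the paper's ``full-rank spanning matrix $\Dm$''); and, as you correctly isolate, both proofs silently upgrade $\|\Xm\|_F^2 = \Theta(P)$ to ``all nonzero singular values are $\Theta(\sqrt{P})$'' (the paper uses this when asserting $\|\Dm_2\|_F^2=\Theta(P)$ and when bounding by $\sigma_{\min}(\Sigmam_1)$), which holds for $\sqrt{P}$-scaled constellations but not under the stated hypothesis alone.

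The gap is in your second estimate, and it is real. Your own (correct) computation gives $\log\frac{\det[\Id_T+{\Xm'}{\Xm'}^\H]}{\det[\Id_T+\Xm\Xm^\H]} = (\rank[{\Xm'}]-\rank[\Xm])\log P + O(1)$, from which ``$\Theta(\log P)$ otherwise'' does not follow: when $\Span[\Xm]\ne\Span[{\Xm'}]$ but the ranks coincide --- which, under the very full-rank convention your third estimate needs, is the \emph{only} way spans can differ --- the coefficient is zero and your formula yields $O(1)$. Concretely, $\Xm=\sqrt{P}\,e_1$ and ${\Xm'}=\sqrt{P}\,e_2$ in $\CC^2$ have different spans yet the log-ratio is exactly $0$, so the span-based dichotomy cannot be certified by rank counting; your closing remark that one ``must tie $\Span(\Xm)=\Span({\Xm'})$ to equality of ranks'' handles the first branch but not the converse needed for the second. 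Notably, the paper's own proof stumbles at the same spot: it shows $\sigma_i(\Gammam)=\Theta(P)$ for $i\le M_1+M_2-r$ and $O(1)$ beyond, but those ``$O(1)$'' eigenvalues can be $\Theta(1/P)$ (in the example above they equal $\tfrac{1}{1+P}$), and their $-\log P$ contributions cancel the $+\log P$ ones, so the final step $\sum_i\log\sigma_i(\Gammam)=\Theta(\log P)$ is unjustified. The defensible statement --- and all that the paper's subsequent argument actually uses --- is that the determinant term is $O(\log P)$ in general (and $\Theta(\log P)$ precisely when the ranks differ, given the singular-value normalization), so that $d(\Xm\to{\Xm'})$ remains the only term that can grow linearly in $P$. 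You should state and prove that weaker bound rather than chase the lemma's $\Theta(\log P)$ as written.
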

\begin{proof}
	Please see Appendix~\ref{proof:lem:meanLLR_scaling}.
\end{proof}
\sheng{We see that the only term that can scale up linearly  with $P$ is
$d(\Xm\!\to\!{\Xm'}) := \trace\big((\Id_T + {\Xm'} {\Xm'}^\H
)^{-1} \Xm \Xm^\H\big)$. Let  
$d_{\min}(\Xcal) \defeq \displaystyle\min_{\Xm,{\Xm'}\in\Xcal:
\Xm\ne{\Xm'}}d(\Xm\!\to\!{\Xm'})$, and we have the following design
criterion}
\begin{equation}
\Xc^* = \arg\max_{\Xc} d_{\min}(\Xcal). \label{eq:criterion_minTrace}
\end{equation} 

\subsection{The Single-User Case} \label{sec:single_user}
In the single-user case with $M$ transmit antenna, it is known that the
high-SNR optimal input signal belongs to the Grassmann
manifold~\cite{ZhengTse2002Grassman}. We consider Grassmannian
constellation~\cite{Gohary2009GrassmanianConstellations} $\Xc \!\subset\!
G(\CC^T,M)$, thus ${\Xm}^\H{\Xm} \!=\! \frac{PT}{M}\Id_M, \forall \Xm \!\in\! \Xc$. It follows that
\begin{equation}
d(\Xm\!\to\!{\Xm'}) 
= PT\bigg(1 - \alpha_{P,T,M}\frac{\| {\Xm'}^\H \Xm \|_F^2}{(PT)^2}\bigg) , 
\end{equation}%
where 
$
\alpha_{P,T,M} := \left(\tfrac{1}{PT}+\frac{1}{M} \right)^{-1}. 
$
Therefore, the design criterion~\eqref{eq:criterion_minTrace} is equivalent to
$
\Xc 
=
\arg\displaystyle\min_{\Xcal} \max_{\Xm,{\Xm'}\in\Xcal:\Xm\ne{\Xm'}}\! \| {\Xm'}^\H \Xm \|_F^2.
$
This coincides with the common criterion of maximizing the minimum pairwise chordal distance~\cite{Conway1996packing,Gohary2009GrassmanianConstellations,Kammoun2007noncoherentCodes,Hoang2019TWC_Cubesplit}.

\subsection{The Two-User Case}
\label{sec:alternating}
In the two-user case, we assume for simplicity that $M_1=M_2=M$,
although the general case follows straightforwardly. We first develop
\begin{align}
d(\Xm\to{\Xm'})
&= \trace\big( \Xm_1^\H (\Id_T+ {\Xm'}{\Xm'}^\H )^{-1} \Xm_1 \big) \notag \\ 
&\quad + \trace\big( \Xm_2^\H (\Id_T + {\Xm'}{\Xm'}^\H )^{-1} \Xm_2
\big), 
\end{align}
where we recall that $\Xm := [\Xm_1 \ \Xm_2]$, ${\Xm'} := [\Xm'_1 \ \Xm'_2]$ with $\Xm_k,\Xm'_k\in\Xcal_k$, $k\in\{1,2\}$, and ${\Xm'}\ne\Xm$. 
\sheng{There are two types of error event.}
\begin{enumerate} [leftmargin=*]
  \item \sheng{Simultaneous detection error},
	i.e., $\Xm_1\ne\Xm'_1$, $\Xm_2 \ne \Xm'_2$:
	\begin{align}
	d(\Xm\!\to\!{\Xm'}) &= \trace\big( \Xm_1^\H ( \Id_T\!+\! \Xm'_1 {\Xm'}_1^\H \!+\! \Xm'_2 {\Xm'}_2^\H )^{-1} \Xm_1
	\big) \notag \\
	&\quad+ \trace\big( \Xm_2^\H ( \Id_T\!+\! {\Xm'}_1^\H
	\Xm'_1 \!+\! {\Xm'}_2^\H \Xm'_2 )^{-1} \Xm_2 \big). 
	\end{align}%
	\item \sheng{One sided detection error}, i.e.,
          $\Xm_k=\Xm'_k$, $\Xm_{l}\ne\Xm'_{l}$, $k\ne
          l\in\left\{ 1,2 \right\}$:
	\begin{align}
	d(\Xm\!\to\!{\Xm'}) &= 
        {\trace\big( \Xm_k^\H ( \Id_T\!+\! {\Xm}_k {\Xm}_k^\H \!+\! \Xm'_l {\Xm'}_l^\H )^{-1} \Xm_k
		\big)}
                \notag \\
	&\quad+ \trace\big( \Xm_l^\H ( \Id_T+ {\Xm'}_k^\H
	\Xm'_k + {\Xm'}_l^\H \Xm'_l )^{-1} \Xm_l \Big). 
	\end{align}%
	\end{enumerate}
\sheng{Let us define}
\begin{align}
\!\!\!\!\!d_{12}(\Xcal) &:=\!\!  \min_{\Xm_1\ne\Xm'_1\in\Xcal_1\atop
{\Xm}_2\in\Xcal_2}\!\!\! \trace\big(\Xm_1^\H ( \Id_T \!+\! \Xm'_1 {\Xm'}_1^\H \!+\! {\Xm}_2 {\Xm}_2^\H )^{-1} \Xm_1
\big), \label{eq:d12} \\
\!\!\!\!\!d_{21}(\Xcal) &:=\!\!  \min_{\Xm_2\ne\Xm'_2\in\Xcal_2\atop
\Xm'_1\in\Xcal_1}\!\!\! \trace\big(\Xm_2^\H ( \Id_T \!+\! {\Xm}_1 {\Xm}_1^\H \!+\! \Xm'_2 {\Xm'}_2^\H )^{-1} \Xm_2
\big). \label{eq:d21}
\end{align}%
\sheng{Since $\trace\big( \Xm_k^\H ( \Id_T\!+\! {\Xm}_k {\Xm}_k^\H \!+\! \Xm'_l {\Xm'}_l^\H )^{-1} \Xm_k
		\big) \le M$, $\forall\,k\ne l$, considering both types
                of error, we obtain}
\begin{align}
\min\left\{ d_{12}(\Xcal),
d_{21}(\Xcal) \right\} &\le d_{\min}(\Xcal) \notag\\&\le  \min\left\{
d_{12}(\Xcal), d_{21}(\Xcal) \right\} + M. 
\end{align}%
Therefore, $d_{\min}(\Xcal)$ is within a constant gap to $\min\left\{
d_{12}(\Xcal), d_{21}(\Xcal) \right\}$, and $d_{\min}(\Xcal)$ scales linearly with $P$ when $P$ is large if and only if $\min\left\{
d_{12}(\Xcal), d_{21}(\Xcal) \right\}$ does so. Based on this observation, we propose the
following design criterion
\begin{align} \label{eq:criterion_minAltTrace}
\Xcal^* = \arg\max_{\Xcal} \ \min\left\{ d_{12}(\Xcal), d_{21}(\Xcal) \right\}. 
\end{align}%
\sheng{Note that the above metric, though simplified, is not convex and
is hard to optimize. 
Nevertheless, since} each of the traces in $d_{12}(\Xcal)$ and
$d_{21}(\Xcal)$ involves a symbol pair in the constellation of one user
and a symbol from the other user's constellation, we propose a method to approximately solve \eqref{eq:criterion_minAltTrace} as follows. First $\Xc_1$ and $\Xc_2$ are initialized. Then, we iteratively alternate between fixing $\Xc_2$, optimizing $\Xc_1$ by $\Xcal_1 = \arg\displaystyle\max_{\Xcal_1} d_{12}(\Xcal)$, and fixing $\Xc_1$, optimizing $\Xc_2$ by $\Xcal_2 = \arg\displaystyle\max_{\Xcal_2} d_{21}(\Xcal)$. We refer to this as {\em alternating optimization}.
%
It has fewer variables to optimize than directly solving~\eqref{eq:criterion_minMean}, \eqref{eq:criterion_minTrace}, or \eqref{eq:criterion_minAltTrace}.

\section{\sheng{The Symmetrical Rate Case} } \label{sec:partitioning}
\sheng{In the following, we focus on the case with symmetrical user
rate. In addition, we let the average power of both constellations
to be the same to further simplify the
optimization.}\footnote{\sheng{Note that using maximum power for both
users may be suboptimal for non-coherent MAC.}}
\sheng{To further reduce the solution space, we make the 
(suboptimal)~assumption that the individual constellations are from the
Grassmann manifold.}
From \sheng{the practical perspective}, this is desirable
since the constellation is oblivious to the presence of the other user. 

\sheng{Nevertheless}, there must be constraints
	between the symbols of different users. For instance, if the
	constellations are such that $\Xm_1=\Xm_2$ can occur, then $d_{12}(\Xcal)$
	and $d_{21}(\Xcal)$ are upper-bounded by a constant. 
This can be developed in a formal
way as follows. 
An upper bound can be obtained by removing either term inside the
inverse, namely,
\begin{align}
d_{12}(\Xcal) &\le \min\Big\{ \min_{\Xm_1\ne\Xm'_1\in\Xcal_1} \trace\big(\Xm_1 ^\H( \Id_T+ \Xm'_1{\Xm'}_1^\H)^{-1} \Xm_1 \big),\Big.\notag \\
&~\Big. \min_{\Xm_1\in\Xcal_1, {\Xm}_2\in\Xcal_2} \trace\big(\Xm_1^\H (\Id_T+ {\Xm}_2 {\Xm}_2^\H )^{-1} \Xm_1 \big) \Big\}, \label{eq:d12_ub} \\
d_{21}(\Xcal) &\le \min\Big\{ \min_{\Xm_2\ne\Xm'_2\in\Xcal_2}
\trace\big(\Xm_2^\H ( \Id_T+ \Xm'_2 {\Xm'}_2^\H)^{-1} \Xm_2
\big), \Big.\notag \\ 
&~\Big. \min_{\Xm_1\in\Xcal_1, {\Xm}_2\in\Xcal_2}
\trace\big(\Xm_2^\H ( \Id_T+ {\Xm}_1 {\Xm}_1^\H)^{-1} \Xm_2\big)
\Big\}. \label{eq:d21_ub} 
\end{align}%
Therefore, for $d_{12}(\Xcal)$ and $d_{21}(\Xcal)$ to be large,
the upper bounds must
be large. The next proposition makes the argument precise. 
\begin{proposition}[Necessary condition] \label{prop:necessary}
	Let $\Xcal_1$ and $\Xcal_2$ be s.t. $\Xm_k^\H\Xm_k = \frac{PT}{M} \Id_M$,
	$\forall\,\Xm_k\in\Xcal_k$, $k\in\{1,2\}$. If the following lower bound on the
	$d$-values holds for some $c\in[0,1/M]$
	\begin{equation}
	\min\left\{ d_{12}(\Xcal), d_{21}(\Xcal) \right\} \ge 
	PT\left(1 - \alpha_{P,T,M} \, c \right), \label{eq:necc0}
	\end{equation}%
	where $\alpha_{P,T,M} := \bigl(\frac{1}{PT}+\frac{1}{M} \bigr)^{-1}$, then we must have
	\begin{align}
	&\frac{1}{(PT)^2}\max\Big\{ \max_{\Xm_1\ne\Xm'_{1}\in\Xcal_1} \big\|
	{\Xm'}_1^\H \Xm_1
	\big\|_F^2,
	\max_{\Xm_2\ne\Xm'_{2}\in\Xcal_2} \big\|
	{\Xm'}_2^\H\Xm_2
	\big\|_F^2, \Big.\notag \\
	&\qquad\qquad \Big.
	\max_{\Xm_1\in\Xcal_1, \Xm_2\in\Xcal_2} \|
	{\Xm}_2^\H\Xm_1 \|_F^2
	\Big\} \le c.
	\label{eq:necc}
	\end{align}%
\end{proposition}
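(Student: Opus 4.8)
The plan is to rewrite each of the inner traces appearing in the upper bounds \eqref{eq:d12_ub} and \eqref{eq:d21_ub} in the single-user closed form obtained in Section~\ref{sec:single_user}, and then to propagate the hypothesized lower bound \eqref{eq:necc0} through these bounds so as to constrain the pairwise correlations.

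The only computation needed is the identity that, for any $\Um,\Vm\in\CC^{T\times M}$ with $\Um^\H\Um = \Vm^\H\Vm = \frac{PT}{M}\Id_M$,
\begin{equation}
\trace\big(\Um^\H(\Id_T+\Vm\Vm^\H)^{-1}\Um\big) = PT\Big(1 - \tfrac{\alpha_{P,T,M}}{(PT)^2}\big\|\Vm^\H\Um\big\|_F^2\Big). \label{eq:singleuser_id}
\end{equation}
This is precisely the single-user expression for $d(\Xm\!\to\!{\Xm'})$ derived in Section~\ref{sec:single_user}; it follows from the Woodbury identity $(\Id_T+\Vm\Vm^\H)^{-1} = \Id_T - \Vm(\Id_M+\Vm^\H\Vm)^{-1}\Vm^\H$ together with $\Vm^\H\Vm = \frac{PT}{M}\Id_M$ (which reduces the inner inverse to the scalar matrix $\frac{M}{M+PT}\Id_M$), $\trace(\Um^\H\Um)=PT$, and $\trace(\Um^\H\Vm\Vm^\H\Um)=\|\Vm^\H\Um\|_F^2$. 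By hypothesis every symbol of $\Xcal_1$ and $\Xcal_2$ satisfies $\Xm_k^\H\Xm_k=\frac{PT}{M}\Id_M$, so \eqref{eq:singleuser_id} applies to each trace in \eqref{eq:d12_ub} and \eqref{eq:d21_ub}.

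Next I would substitute \eqref{eq:singleuser_id} into the upper bounds. Since $x\mapsto PT\big(1-\tfrac{\alpha_{P,T,M}}{(PT)^2}x\big)$ is strictly decreasing, minimizing a trace over a symbol pair amounts to maximizing the corresponding Frobenius correlation; moreover the two cross terms coincide because $\|\Xm_1^\H\Xm_2\|_F=\|\Xm_2^\H\Xm_1\|_F$. Writing $A:=\max_{\Xm_1\ne\Xm'_1\in\Xcal_1}\|{\Xm'}_1^\H\Xm_1\|_F^2$, $B:=\max_{\Xm_2\ne\Xm'_2\in\Xcal_2}\|{\Xm'}_2^\H\Xm_2\|_F^2$, and $C:=\max_{\Xm_1\in\Xcal_1,\Xm_2\in\Xcal_2}\|\Xm_2^\H\Xm_1\|_F^2$, this gives
\begin{align}
d_{12}(\Xcal) &\le PT\Big(1-\tfrac{\alpha_{P,T,M}}{(PT)^2}\max\{A,C\}\Big), \notag\\
d_{21}(\Xcal) &\le PT\Big(1-\tfrac{\alpha_{P,T,M}}{(PT)^2}\max\{B,C\}\Big). \notag
\end{align}
Taking the minimum and invoking monotonicity once more collapses the two right-hand sides, so that $\min\{d_{12}(\Xcal),d_{21}(\Xcal)\}\le PT\big(1-\tfrac{\alpha_{P,T,M}}{(PT)^2}\max\{A,B,C\}\big)$.

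Finally I would chain this with the assumed lower bound \eqref{eq:necc0}, obtaining $PT(1-\alpha_{P,T,M}c)\le PT\big(1-\tfrac{\alpha_{P,T,M}}{(PT)^2}\max\{A,B,C\}\big)$; cancelling the positive factors $PT$ and $\alpha_{P,T,M}$ and rearranging yields $\tfrac{1}{(PT)^2}\max\{A,B,C\}\le c$, which is exactly \eqref{eq:necc}. There is no serious obstacle: the identity \eqref{eq:singleuser_id} is routine, the prescribed range $c\in[0,1/M]$ is merely the natural range of the normalized correlations and plays no role in the algebra, and the only care needed is bookkeeping---recognizing that the cross term $C$ is shared by both $d_{12}$ and $d_{21}$ and hence enters the overall maximum, and applying the monotonicity in the right direction so that a large guaranteed $d$-value forces a small admissible correlation.
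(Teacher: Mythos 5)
Your proof is correct and takes essentially the same route as the paper: the paper's proof is the one-line remark that the single-user computation of Section~\ref{sec:single_user} should be applied to the upper bounds \eqref{eq:d12_ub} and \eqref{eq:d21_ub}, which is exactly the chain you carry out (Woodbury identity reducing each trace to $PT\big(1-\tfrac{\alpha_{P,T,M}}{(PT)^2}\|\cdot\|_F^2\big)$, monotonicity, the shared cross term, and cancellation against \eqref{eq:necc0}). Your write-up merely makes explicit the steps the paper leaves implicit, including the correct observation that $\|\Xm_1^\H\Xm_2\|_F=\|\Xm_2^\H\Xm_1\|_F$ ties the two cross-correlation terms together.
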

\begin{proof}
	The proof follows the same steps as in the single-user case, applying
	on the upper bounds \eqref{eq:d12_ub} and \eqref{eq:d21_ub}.  
\end{proof}
The above shows that symbol pairs from
different users should fulfill similar distance criteria as symbol pairs
from the same constellation when it comes to identifiability conditions. However, it is unclear whether
\eqref{eq:necc} alone is enough to guarantee a large $d$-value. 
In the following, we shall show that these conditions are indeed sufficient if $c$ is small.  

\begin{proposition}[Sufficient condition] \label{prop:sufficient}
	Let $\Xcal_1$ and $\Xcal_2$ be s.t. $\Xm_k^\H\Xm_k = \frac{PT}{M} \Id_M$,
	$\forall\,\Xm_k\in\Xcal_k$, $k\in\{1,2\}$. 
	If
	\begin{align}
	&\frac{1}{(PT)^2}\max\Big\{ \max_{\Xm_1\ne\Xm'_{1}\in\Xcal_1} \big\|
	{\Xm'}_1^\H \Xm_1
	\big\|_F^2,
	\max_{\Xm_2\ne\Xm'_{2}\in\Xcal_2} \big\|
	{\Xm'}_2^\H\Xm_2
	\big\|_F^2, \Big.\notag \\
	&\qquad \qquad \Big.
	\max_{\Xm_1\in\Xcal_1, \Xm_2\in\Xcal_2} \|
	{\Xm}_2^\H\Xm_1 \|_F^2
	\Big\} \le c,
	\label{eq:suff}
	\end{align}
	for some $c\in[0,1/M]$, then we have 
	\begin{equation}
	\!\min\left\{ d_{12}(\Xcal), d_{21}(\Xcal) \right\} \!\ge\! 
	PT\Big(1 - 2 \big( \tfrac{1}{PT}+\tfrac{1}{M}-\sqrt{c} \big)^{-1}
	c \Big).\! \label{eq:suff2}
	\end{equation}%
\end{proposition}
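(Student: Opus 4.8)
The plan is to lower bound, uniformly over all admissible triples, each individual trace that appears in the definition \eqref{eq:d12} of $d_{12}(\Xcal)$; the bound for $d_{21}(\Xcal)$ then follows by the symmetric argument that swaps the roles of the two users. So I would fix $\Xm_1\neq\Xm'_1\in\Xcal_1$ and $\Xm_2\in\Xcal_2$ and study $\trace\big(\Xm_1^\H(\Id_T + \Xm'_1{\Xm'}_1^\H + \Xm_2\Xm_2^\H)^{-1}\Xm_1\big)$. The crucial device is to collect the two interfering symbols into a single $T\times 2M$ matrix $\Vm := [\Xm'_1\ \Xm_2]$, so that the matrix being inverted is exactly $\Id_T + \Vm\Vm^\H$, and then apply the push-through (Woodbury) identity
\[
(\Id_T+\Vm\Vm^\H)^{-1} = \Id_T - \Vm(\Id_{2M}+\Vm^\H\Vm)^{-1}\Vm^\H .
\]
Writing $\Wm := \Vm^\H\Xm_1$ and $\Sm := \Id_{2M}+\Vm^\H\Vm$, and using $\Xm_1^\H\Xm_1=\frac{PT}{M}\Id_M$ so that $\trace(\Xm_1^\H\Xm_1)=PT$, the target trace becomes $PT - \trace(\Wm^\H\Sm^{-1}\Wm)$, and the whole problem reduces to upper bounding the correction term $\trace(\Wm^\H\Sm^{-1}\Wm)$.

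For the correction term I would invoke the elementary inequality $\trace(\Wm^\H\Sm^{-1}\Wm)\le \|\Wm\|_F^2/\lambda_{\min}(\Sm)$ (from $\trace(\Sm^{-1}\Wm\Wm^\H)\le\lambda_{\max}(\Sm^{-1})\trace(\Wm\Wm^\H)$), which splits the task into a numerator and a denominator estimate, each fed directly by the hypothesis \eqref{eq:suff}. The numerator is immediate: since $\Wm = \big[\begin{smallmatrix}{\Xm'}_1^\H\Xm_1\\ \Xm_2^\H\Xm_1\end{smallmatrix}\big]$, we have $\|\Wm\|_F^2 = \|{\Xm'}_1^\H\Xm_1\|_F^2 + \|\Xm_2^\H\Xm_1\|_F^2 \le 2c(PT)^2$, where the two summands are controlled by the within-user-$1$ and the cross-user terms of \eqref{eq:suff}, respectively. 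The denominator is where the inter-user coupling enters: because every symbol is scaled-semiunitary, $\Vm^\H\Vm = \frac{PT}{M}\Id_{2M} + \Em$ with $\Em := \big[\begin{smallmatrix}\mathbf{0} & {\Xm'}_1^\H\Xm_2\\ \Xm_2^\H\Xm'_1 & \mathbf{0}\end{smallmatrix}\big]$, whose eigenvalues are $\pm$ the singular values of ${\Xm'}_1^\H\Xm_2$. Hence $\lambda_{\min}(\Vm^\H\Vm) \ge \frac{PT}{M} - \sigma_{\max}({\Xm'}_1^\H\Xm_2) \ge \frac{PT}{M} - \|{\Xm'}_1^\H\Xm_2\|_F \ge \frac{PT}{M} - \sqrt{c}\,PT$, again by \eqref{eq:suff}, so that $\lambda_{\min}(\Sm) \ge PT\big(\tfrac{1}{PT}+\tfrac{1}{M}-\sqrt{c}\big)$.

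Combining the two estimates gives, for every admissible triple,
\[
\trace\big(\Xm_1^\H(\Id_T + \Xm'_1{\Xm'}_1^\H + \Xm_2\Xm_2^\H)^{-1}\Xm_1\big) \ge PT\Big(1 - 2\big(\tfrac{1}{PT}+\tfrac{1}{M}-\sqrt{c}\big)^{-1}c\Big),
\]
and taking the minimum over $(\Xm_1,\Xm'_1,\Xm_2)$ yields this same lower bound on $d_{12}(\Xcal)$. Repeating the argument verbatim on \eqref{eq:d21} (with $\Vm := [\Xm_1\ \Xm'_2]$ and $\Xm_2$ the tagged symbol) gives the identical bound on $d_{21}(\Xcal)$, and taking the minimum of the two establishes \eqref{eq:suff2}.

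The step I expect to be the main obstacle is the denominator estimate: keeping $\lambda_{\min}(\Sm)$ safely bounded away from zero hinges on controlling the operator norm of the off-diagonal cross-correlation block ${\Xm'}_1^\H\Xm_2$, and it is exactly here that the third (inter-user) term of the hypothesis \eqref{eq:suff} is indispensable — a purely single-user condition would not suffice. This is also what pins down the admissible range of $c$: one needs $\tfrac{1}{PT}+\tfrac{1}{M}-\sqrt{c}>0$ for the bound to be meaningful, which holds for $c$ small, consistent with $c\in[0,1/M]$ and with the informal claim that the necessary conditions become sufficient when $c$ is small. The remaining ingredients — the Woodbury identity, the trace/eigenvalue inequality, and the block-eigenvalue computation — are all routine.
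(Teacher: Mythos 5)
Your proof is correct and takes essentially the same route as the paper's: the push-through (Woodbury) identity with $\Xm_{12}=[\Xm'_1\ \Xm_2]$, the numerator split $\|{\Xm'}_1^\H\Xm_1\|_F^2+\|\Xm_2^\H\Xm_1\|_F^2\le 2c(PT)^2$, and your eigenvalue computation for the off-diagonal block matrix $\Em$ is exactly the content of the paper's Lemma~\ref{lem:min_singular_value}, yielding the same denominator estimate $PT\big(\tfrac{1}{PT}+\tfrac{1}{M}-\sqrt{c}\big)$. The only (minor, arguably cleaner) difference is that you bound $\lambda_{\min}(\Id_{2M}+\Vm^\H\Vm)$ directly, which sidesteps the paper's case discussion of the minimum \emph{non-zero} singular value when $\Xm_{12}$ is rank-deficient.
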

\begin{proof}
	Please see Appendix~\ref{proof:sufficient}.
\end{proof}
The two propositions above motivates the following simplified design criterion
\begin{multline} \label{eq:criterion_minChordal}
\Xc^* = \arg\min_{\Xcal} \ 
\max\Big\{ \max_{\Xm_1\ne\Xm'_{1}\in\Xcal_1} \big\|
{\Xm'}_1^\H \Xm_1
\big\|_F^2, \Big. \\
\qquad\quad\Big.
\max_{\Xm_2\ne\Xm'_{2}\in\Xcal_2} \big\|
{\Xm'}_2^\H\Xm_2
\big\|_F^2, \max_{\Xm_1\in\Xcal_1, \Xm_2\in\Xcal_2} \big\|
{\Xm}_2^\H\Xm_1 \big\|_F^2
\Big\}.
\end{multline}
Based on this criterion, we propose a construction as follows. Let $\Xcal_{\text{SU}}$ be a single-user constellation satisfying 
$
\frac{1}{(PT)^2}\displaystyle\max_{\Xm \ne {\Xm'} \in \Xc_{\text{SU}}} \|
{\Xm'}^\H \Xm \|_F^2 \le c,
$
for some $c\in\bigl[0,\frac{1}{M}\bigr]$. 
%
%
%
%
Then, we can generate $\Xc_1$ and $\Xc_2$ by partitioning  $\Xcal_{\text{SU}}$ into two disjoint subsets. In this way, from Proposition~\ref{prop:sufficient}, we can
guarantee that 
\begin{equation}
d_{\min}(\Xcal_1,\Xcal_2) \ge 
PT\left(1 - 2 \big( \tfrac{1}{PT}+\tfrac{1}{M}-\sqrt{c} \big)^{-1} \,
c \right). \label{eq:tmp833}
\end{equation}%
In other words, with such a construction, the joint constellation design
problem becomes essentially an individual constellation design problem. A random partition would be sufficient to guarantee \eqref{eq:tmp833}, although one can smartly partition the set
$\Xcal_{\text{SU}}$ to improve over \eqref{eq:tmp833}. The optimal partition problem is equivalent to a min-max graph bipartitioning problem. Also note that for the right-hand side of \eqref{eq:tmp833} to scale linearly with $P$, we must have that $c < \big[\big(\frac{1}{2PT} + \frac{1}{2M} +\frac{1}{16}\big)^{1/2}-\frac{1}{4}\big]^2$. This limits the size of the initial single-user constellation $\Xcal_{\text{SU}}$. 

\section{Numerical Results} \label{sec:performance}
In this section, we consider the single transmit antenna case ($M_1
\!=\! M_2 \!=\! 1$) and focus on the \sheng{symmetrical rate} setting
$R_1 \!=\! R_2 \!=\! B/T$ \sheng{with equal power of both users}.  We
assume Grassmannian signaling, i.e., each constellation symbol is
\sheng{a} unit vector representative of a point on $G(\CC^T,1)$ scaled with $\sqrt{PT}$. 
We solve numerically \eqref{eq:criterion_minMean}, \eqref{eq:criterion_minTrace}, and the alternating optimization. In general, we want to solve the optimization on the manifold 
$
\displaystyle\max_{\Xc = \Xc_1 \times \Xc_2} \displaystyle\min_{\Xm \ne {\Xm'} \in \Xc} f(\Xm, {\Xm'}), 
$
where $\Xc_k \subset G(\CC^T,1), k \in \{1,2\}$,
and 
$f(\Xm, {\Xm'})$ is customized according to the considered criterion.
For smoothness, we use the approximation $\max_i x_i \approx \epsilon \log \sum_i \exp (x_i/\epsilon)$ with a small $\epsilon$ and obtain 
$
\min_{\Xc  = \Xc_1 \times \Xc_2} \epsilon \log \sum_{\Xm \ne {\Xm'} \in \Xc} \exp \big(-\frac{f(\Xm, {\Xm'})}{\epsilon}\big).
$
In Appendix~\ref{app:gradient}, we compute the gradient of this new objective function. 
Finally, we resort to the manopt toolbox~\cite{manopt} to solve the optimization by gradient descent on the manifold for a fixed SNR of $P = 30$~dB (even when the performance of the resulting constellations is benchmarked at other SNR values). 

We compare the performance of the constellations optimized with the proposed criteria and the precoding design in \cite{HoangAsilomar2018multipleAccess} (with Precoder Type II therein), and a coherent pilot-based scheme with orthogonal pilots and quadrature amplitude modulation~(QAM) data symbols. In the pilot-based scheme, the receiver uses ML decoder or a systematic decoder consisting of minimum-mean-square-error~(MMSE) channel estimation, MMSE equalization, and QAM demapper. 



In Fig.~\ref{fig:joint_SER_T5B5N4}, we plot the joint symbol error
rate~(SER) \eqref{eq:joint_SER} of these schemes for $T = 5$, $B = 5$,
and $N = 4$. We observe that the constellations optimized with the
metrics $\min_{\Xm \ne {\Xm'} \in \Xc} \meanLLR$
\eqref{eq:criterion_minMean} and $d_{\min}(\Xc)$
\eqref{eq:criterion_minTrace} achieve similar performance and are the
best among the schemes with \sheng{the same rate pair}. The performance of the alternatively optimized constellation is slightly inferior to that and performs better the pilot-based scheme at $\SNR < 18$~dB. The partitioning design (with random partition) and the precoding design~\cite{HoangAsilomar2018multipleAccess} have similar performance.
\begin{figure}[h!]
	\centering
%
%
\begin{tikzpicture}[scale=.69,style={mark size=3pt,line width=3pt}]

\begin{axis}[%
width=4.2in,
height=2.95in,
at={(0.758in,0.481in)},
scale only axis,
xmin=4,
xmax=18,
xtick={ 2, 4, 6, 8, 10, 12, 14, 16, 18},
xlabel style={font=\color{white!15!black}},
xlabel={SNR (dB)},
ymode=log,
ymin=1e-06,
ymax=1,
yminorticks=true,
ylabel style={font=\color{white!15!black}},
ylabel={Joint Symbol Error Rate},
axis background/.style={fill=white},
xmajorgrids,
ymajorgrids,
yminorgrids,
legend style={at={(0.02,0.015)}, anchor=south west, legend cell align=left, align=left, draw=white!15!black, nodes={scale=0.92}}
]
\addplot [color=blue, dashed, line width=1pt, mark=o, mark options={solid, blue}]
  table[row sep=crcr]{%
2	0.2611\\
4	0.1161\\
6	0.0442\\
8	0.0192\\
10	0.00684\\
12	0.00305\\
14  8.9e-4  \\
16  3.2e-4  \\
18  9e-5  \\
};
\addlegendentry{Precoding~\cite{HoangAsilomar2018multipleAccess}}

\addplot [color=blue, dashed, line width=1pt, mark=square, mark size=2.5pt, mark options={solid, blue}]
  table[row sep=crcr]{%
2	0.241\\
4	0.1037\\
6	0.0388\\
8	0.0149\\
10	0.00583\\
12	0.00229\\
14  1.01e-3  \\
16  4.6333e-04  \\
18  1.35e-4  \\
};
\addlegendentry{Partitioning (Sec.~\ref{sec:partitioning})}

\addplot [color=green, dashdotted, line width=1pt, mark size=2.5pt, mark=*, mark options={solid, green}]
  table[row sep=crcr]{%
2	0.2315\\
4	0.1007\\
6	0.0355\\
8	0.0088\\
10	0.00186\\
12	0.00038\\
14  9e-5  \\
16  1.7e-5  \\
18  2e-06  \\
};
\addlegendentry{Maximizing $\min_{\Xm \ne {\Xm'} \in \Xc} \E\big[\rv{L}(\Xm \!\rightarrow\! {\Xm'})\big]$~\eqref{eq:criterion_minMean}}

\addplot [color=red, dashdotted, line width=1pt, mark=triangle, mark size=4pt, mark options={solid, rotate=180, red}]
  table[row sep=crcr]{%
2	0.2359\\
4	0.1005\\
6	0.0335\\
8	0.0077\\
10	0.00189\\
12	0.00039\\
14  1e-4  \\
16  2e-5  \\
18  2.0930e-06  \\
};
\addlegendentry{Maximizing $d_{\min}(\Xc)$~\eqref{eq:criterion_minTrace}}

\addplot [color=red, dashdotted, line width=1pt, mark=triangle, mark size=4pt, mark options={solid, rotate=270, red}]
  table[row sep=crcr]{%
2	0.2427\\
4	0.099\\
6	0.0348\\
8	0.0114\\
10	0.00239\\
12	0.00062\\
14  1.5000e-04  \\
16  2.6667e-05  \\
18  4.3396e-06  \\
};
\addlegendentry{Alternating optimization (Sec.~\ref{sec:alternating})}

\addplot [color=black, dotted, line width=1pt, mark size=3pt, mark=x, mark options={solid, black}]
  table[row sep=crcr]{%
2	0.3649\\
4	0.1701\\
6	0.0622\\
8	0.0206\\
10	0.00535\\
12	0.00113\\
14  2.75e-4  \\
16  4.6667e-05  \\
18  5.2830e-06  \\
};
\addlegendentry{Pilot-based scheme, ML decoder}

\addplot [color=black, dotted, line width=1pt, mark size=3pt, mark=+, mark options={solid, black}]
  table[row sep=crcr]{%
2	0.4912\\
4	0.2957\\
6	0.1561\\
8	0.0648\\
10	0.02655\\
12	0.0089\\
14  0.0029  \\
16  8.1333e-04  \\
18  2.070e-04  \\
};
\addlegendentry{Pilot-based scheme, systematic decoder}


\end{axis}

\end{tikzpicture}%
	\caption{The joint SER of the proposed constellations \sheng{compared
        to the precoding
        design~\cite{HoangAsilomar2018multipleAccess} and a pilot-based
        scheme} for $T = 5$, $B = 5$, and $N = 4$.}
	\label{fig:joint_SER_T5B5N4}
\end{figure}

In the same setting, we show the values of the metrics $\min_{\Xm \ne {\Xm'} \in \Xc} \meanLLR$ and $d_{\min}(\Xc)$ for these schemes in Fig.~\ref{fig:metric_T5B5N4}. As can be seen, $d_{\min}(\Xc)$ is very close to $\min_{\Xm \ne {\Xm'} \in \Xc} \meanLLR$ for $\SNR \ge 20$~dB. The schemes with low ML SER in Fig.~\ref{fig:joint_SER_T5B5N4} has a large value of these metrics. This confirms that our proposed metrics are meaningful for constellation design and evaluation.
\begin{figure}[h!]
	\centering
%
%
\begin{tikzpicture}[scale=.69,style={mark size=3pt,line width=3pt}]
\begin{axis}[%
width=4.3in,
height=2.95in,
at={(0.758in,0.481in)},
scale only axis,
xmin=4,
xmax=32,
xtick={ 4,  8, 12, 16, 20, 24, 28, 32},
xlabel style={font=\color{white!15!black}},
xlabel={SNR (dB)},
ymode=log,
ymin=1,
ymax=4000,
yminorticks=true,
ylabel style={font=\color{white!15!black}},
ylabel={$\frac{1}{N}\displaystyle\min_{\Xm \ne {\Xm'} \in \Xc} \E\big[\rv{L}(\Xm\!\to\!{\Xm'})\big]$ (lines) and $d_{\min}(\Xc)$ (markers)},
axis background/.style={fill=white},
xmajorgrids,
ymajorgrids,
yminorgrids,
legend style={at={(0.02,0.6)}, anchor=south west, legend cell align=left, align=left, draw=white!15!black, nodes={scale=0.98}}
]

\addplot [color=blue, dashed, line width=1.2pt, mark=o, mark options={solid, blue}]
table[row sep=crcr]{%
	0 0 \\
	2 0.1\\
};
\addlegendentry{Precoding~\cite{HoangAsilomar2018multipleAccess}}
\addplot [color=blue, dashed, line width=1.2pt, mark=square, mark size=2.5pt, mark options={solid, blue}]
table[row sep=crcr]{%
	0 0 \\
	2 0.1 \\
};
\addlegendentry{Partitioning (Sec.~\ref{sec:partitioning})}
\addplot [color=green, dashed, line width=1.2pt, mark size=2.5pt, mark=*, mark size=1.5pt, mark options={solid, green}]
table[row sep=crcr]{%
	0 0\\
	2 0.1\\
};
\addlegendentry{Maximizing $\min_{\Xm \ne {\Xm'} \in \Xc} \E\big[\rv{L}(\Xm \!\rightarrow\! {\Xm'})\big]$~\eqref{eq:criterion_minMean}}
\addplot [color=red, dashdotted, line width=1.2pt, mark=triangle, mark size=5pt, mark options={solid, rotate=180, red}]
table[row sep=crcr]{%
	0 0 \\
	2 0.1 \\
};
\addlegendentry{Maximizing $d_{\min}(\Xc)$~\eqref{eq:criterion_minTrace}}
\addplot [color=red, dashdotted, line width=1.2pt, mark=triangle, mark size=5pt, mark options={solid, rotate=270, red}]
table[row sep=crcr]{%
	0 0\\
	2 0.1\\
};
\addlegendentry{Alternating optimization (Sec.~\ref{sec:alternating})}

\addplot [color=black, dotted, line width=1.2pt, mark=x, mark size=3pt, mark options={solid, black}]
table[row sep=crcr]{%
	0 0 \\
	2 0.1\\
};
\addlegendentry{Pilot-based scheme}

\addplot [color=blue, draw=none, line width=1.2pt, mark=o, mark options={solid, blue}, only marks]
  table[row sep=crcr]{%
4	2.0984174692586\\
8	2.76832883966429\\
12	3.99725392682579\\
16	6.43251485758493\\
20	12.4964679468903\\
24	27.7068969429122\\
28	65.905134379964\\
32	161.851322268047\\
};

\addplot [color=blue, draw=none, line width=1.2pt, mark=square, mark size=2.5pt, mark options={solid, blue}, only marks]
  table[row sep=crcr]{%
4	2.4530278069444\\
8	2.80249388991699\\
12	3.35118811217385\\
16	4.48755906325657\\
20	6.95623160312973\\
24	13.1229503464044\\
28	28.5992541600736\\
32	67.4684587542839\\
};

\addplot [color=green, draw=none, line width=1.2pt, mark size=2.5pt, mark=*, mark size=1.5pt, mark options={solid, green}, only marks]
  table[row sep=crcr]{%
4	4.34220546560225\\
8	8.76266395151446\\
12	19.7187663040643\\
16	47.0706581851467\\
20	115.712897339682\\
24	287.950843057077\\
28	720.372079359971\\
32	1806.56265768064\\
};

\addplot [color=red, draw=none, line width=1pt, mark=triangle, mark size=5pt, mark options={solid, rotate=180, red}, only marks]
  table[row sep=crcr]{%
4	4.3464688329927\\
8	8.77558365828902\\
12	19.6958021896097\\
16	47.0056108432417\\
20	115.353932467924\\
24	286.662328462442\\
28	716.961264890367\\
32	1797.82000894721\\
};

\addplot [color=red, draw=none, line width=1pt, mark=triangle, mark size=5pt, mark options={solid, rotate=270, red}, only marks]
  table[row sep=crcr]{%
4	2.96500933983253\\
8	4.72402248839876\\
12	8.90136927256601\\
16	19.2869967842949\\
20	45.3297042173332\\
24	110.727845140037\\
28	274.993253736676\\
32	687.606391975158\\
};

\addplot [color=black, draw=none, line width=1.2pt, mark=x, mark size=3pt, mark options={solid, black}, only marks]
  table[row sep=crcr]{%
4	4.16043243774584\\
8	8.07240926770741\\
12	17.7412600554605\\
16	41.9607293631702\\
20	102.76954607794\\
24	255.503207836167\\
28	639.148347728078\\
32	1602.81958502378\\
};

\addplot [color=blue, dashed, line width=1.2pt, forget plot]
  table[row sep=crcr]{%
4	4.6831571071635\\
8	5.11382138872007\\
12	6.0950695476042\\
16	8.51421853078472\\
20	14.5716695582491\\
24	29.7794958900382\\
28	67.9766949306034\\
32	163.9224690669\\
};
\addplot [color=blue, dashed, line width=1.2pt, forget plot]
  table[row sep=crcr]{%
4	4.7949975453669\\
8	5.0648025726405\\
12	5.54776081552482\\
16	6.54379105763996\\
20	9.00424949078443\\
24	15.1676726320667\\
28	30.6426603604889\\
32	69.5113403621576\\
};
\addplot [color=green, dashed, line width=1.2pt, forget plot]
  table[row sep=crcr]{%
4	6.83426351762013\\
8	11.1712732390129\\
12	22.0472282459524\\
16	49.3578508780232\\
20	117.955393578539\\
24	290.237730706755\\
28	722.658227347212\\
32	1808.84851103076\\
};
\addplot [color=red, dashdotted, line width=1.2pt, forget plot]
  table[row sep=crcr]{%
4	6.82702099573026\\
8	11.1508452106548\\
12	21.9933802602065\\
16	49.21992009009\\
20	117.592404709795\\
24	288.940204506734\\
28	719.238225116876\\
32	1800.09660424661\\
};
\addplot [color=red, dashdotted, line width=1.2pt, forget plot]
  table[row sep=crcr]{%
4	5.39860470837305\\
8	7.08825100019371\\
12	11.2361209999254\\
16	21.6097008449763\\
20	47.6475613807262\\
24	113.043764628921\\
28	277.30840053726\\
32	689.921230958891\\
};
\addplot [color=black, dotted, line width=1.2pt, forget plot]
  table[row sep=crcr]{%
4	6.4366874224319\\
8	10.2748314397425\\
12	19.9125462841818\\
16	44.1193270822303\\
20	104.923044845152\\
24	257.654669063748\\
28	641.298996585805\\
32	1604.96991027907\\
};
\end{axis}
\end{tikzpicture}%
	\caption{The value of the metrics $\frac{1}{N}\min_{\Xm \ne {\Xm'} \in \Xc} \E\big[\rv{L}(\Xm\!\to\!{\Xm'})\big]$ (lines) and $d_{\min}(\Xc)$ (markers) for the considered schemes for $T \!=\! 5$ and $B \!=\! 5$.}
	\label{fig:metric_T5B5N4}
\end{figure}

In Fig.~\ref{fig:joint_SER_B8N4}, we consider larger constellations ($B = 8$) for which numerical solutions to \eqref{eq:criterion_minMean}, \eqref{eq:criterion_minTrace}, and the alternating optimization become cumbersome. However, the random partitioning construction, which is based on our metrics, achieves good performance and outperforms the pilot-based scheme.
\begin{figure}[h!]
	\centering
	\hspace{-.2cm}
	\subfigure[$T=5, B = 8, N = 4$]{
%
%
\begin{tikzpicture}[scale=.66,style={mark size=3pt,line width=3pt}]

\begin{axis}[%
width=2.2in,
height=2.2in,
at={(0.758in,0.481in)},
scale only axis,
xmin=4,
xmax=16,
xtick={4, 6, 8, 10, 12, 14, 16},
xlabel style={font=\color{white!15!black}},
xlabel={\large SNR (dB)},
ymode=log,
ymin=0.0001,
ymax=1,
yminorticks=true,
ylabel style={font=\color{white!15!black}},
ylabel={\large Joint Symbol Error Rate},
axis background/.style={fill=white},
xmajorgrids,
ymajorgrids,
yminorgrids,
legend style={at={(0.02,0.02)}, anchor=south west, legend cell align=left, align=left, draw=white!15!black}, nodes={scale=.9}
]
\addplot [color=blue, dashed, line width=1pt, mark=o, mark size=2.5pt, mark options={solid, blue}]
  table[row sep=crcr]{%
0	0.843\\
2	0.653\\
4	0.448\\
6	0.209\\
8	0.0892\\
10	0.0325\\
12	0.0113\\
14	0.002660\\
16	0.00094\\
};
\addlegendentry{Precoding~\cite{HoangAsilomar2018multipleAccess}}

\addplot [color=blue, dashed, line width=1pt, mark=square, mark size=2.5pt, mark options={solid, blue}]
  table[row sep=crcr]{%
0	0.814\\
2	0.624\\
4	0.393\\
6	0.18\\
8	0.0827\\
10	0.0294\\
12	0.0089\\
14	0.00270\\
16	0.00058\\
};
\addlegendentry{Partitioning (Sec.~\ref{sec:partitioning})}

\addplot [color=black, dotted, line width=1pt, mark=x, mark size=3pt, mark options={solid, black}]
  table[row sep=crcr]{%
0	0.949\\
2	0.857\\
4	0.681\\
6	0.436\\
8	0.2718\\
10	0.116\\
12	0.0414\\
14	0.01184\\
16	0.00276\\
};
\addlegendentry{Pilot-based, ML decoder}

\addplot [color=black, dotted, line width=1pt, mark=+, mark size=3pt, mark options={solid, black}]
  table[row sep=crcr]{%
0	0.959\\
2	0.897\\
4	0.782\\
6	0.608\\
8	0.4269\\
10	0.2529\\
12	0.1187\\
14	0.04966\\
16	0.01736\\
};
\addlegendentry{Pilot-based, systematic decoder}

\end{axis}
\end{tikzpicture}
	\hspace{-.2cm}
	\subfigure[$T=6, B = 8, N = 4$]{
%
%
\begin{tikzpicture}[scale=.66,style={mark size=3pt,line width=3pt}]

\begin{axis}[%
width=1.8in,
height=2.2in,
at={(0.758in,0.481in)},
scale only axis,
xmin=0,
xmax=12,
xtick={0, 2, 4, 6, 8, 10, 12},
xlabel style={font=\color{white!15!black}},
xlabel={SNR (dB)},
ymode=log,
ymin=0.001,
ymax=1,
yminorticks=true,
ylabel style={font=\color{white!15!black}},
ylabel={Joint Symbol Error Rate},
axis background/.style={fill=white},
xmajorgrids,
ymajorgrids,
yminorgrids,
legend style={legend cell align=left, align=left, draw=white!15!black}
]
\addplot [color=blue, dashed, line width=1pt, mark=o, mark size=2.5pt, mark options={solid, blue}]
table[row sep=crcr]{%
0	0.649\\
2	0.399\\
4	0.172\\
6	0.0655\\
8	0.0202\\
10	0.005866666666667\\
12	0.001266666666667\\
14	0\\
16	0\\
};

\addplot [color=blue, dashed, line width=1pt, mark=square, mark size=2.5pt, mark options={solid, blue}]
table[row sep=crcr]{%
0	0.623\\
2	0.381\\
4	0.165\\
6	0.0617\\
8	0.0182\\
10	0.0044\\
12	0.001\\
14	0\\
16	0\\
};

\addplot [color=black, dotted, line width=1pt, mark=x, mark size=3pt, mark options={solid, black}]
table[row sep=crcr]{%
0	0.749\\
2	0.53\\
4	0.271\\
6	0.1139\\
8	0.0344\\
10	0.008533333333333\\
12	0.0017\\
14	0\\
16	0\\
};

\addplot [color=black, dotted, line width=1pt, mark=+, mark size=3pt, mark options={solid, black}]
table[row sep=crcr]{%
0	0.835\\
2	0.67\\
4	0.439\\
6	0.2461\\
8	0.1131\\
10  0.042066666666667\\
12	0.0151\\
14	0\\
16	0\\
};

\end{axis}
\end{tikzpicture}
	\hspace{-.2cm}

	\caption{The joint SER of the partitioning design in comparison with the precoding design and a pilot-based scheme for $T \!\in\! \{5,6\}$, $B \!=\! 8$, and $N \!=\! 4$.}
	\label{fig:joint_SER_B8N4}
\end{figure}

\section{Conclusion} \label{sec:conclusion}
\sheng{This work is our first attempt of joint constellation design for
non-coherent MIMO MAC. We have derived some closed-form metrics which
turned out to be effective for such purpose. A next step is to
investigate the asymmetrical rate case in which power optimization also
plays a key role. 
}

\bibliographystyle{IEEEtran}
\bibliography{IEEEabrv,./biblio}

\appendices

\section{Proof that $\lim\limits_{N\to\infty}\P({\Xm}\to {\Xm'}) = 0$ for any pair of distinct symbols $\Xm$ and $\Xm'$ of an identifiable joint constellation} \label{proof:PEPvanishes}
Recall that $\P({\Xm}\to {\Xm'}) = \P(\LLR \le 0)$. Let $\rvMat{Y}_0 \defeq (\Id_T+\Xm\Xm^\H)^{-\frac12}\rvMat{Y}$ be a ``whitened'' version of $\rvMat{Y}$, then $\rvMat{Y}_0$ is a Gaussian matrix with independent columns following $\Cc\Nc(\mathbf{0},\Id_T)$. Let
\begin{align}
\Lambdam \defeq (\Id_T+\Xm\Xm^\H)^{\frac12}(\Id_T+\Xm'{\Xm'}^\H)^{-1} (\Id_T+\Xm\Xm^\H)^{\frac12} - \Id_T.
\end{align}
From \eqref{eq:LLR0}, we develop $\LLR$ as
\begin{align}
\LLR &= -N\log\det(\Id_T+ \Lambdam) + \trace(\Lambdam \rvMat{Y}_0 \rvMat{Y}_0^\H) \\
&= -N\log\det(\Id_T+ \Lambdam) + \sum_{i=1}^{T} \lambda_i \rv{g}_i,
\end{align}
where $\lambda_1,\dots,\lambda_T$ are $T$ eigenvalues of $\Lambdam$, and $\rv{g}_1,\dots,\rv{g}_T$ are independent Gamma random variables with shape $N$ and scale 1. It follows that
\begin{align}
\meanLLR &= -N\log\det(\Id_T+ \Lambdam) + \sum_{i=1}^{T} \lambda_i \\
&= -N\log\det(\Id_T+ \Lambdam) + N\trace(\Lambdam), \\
\varLLR &= N \sum_{i=1}^{T} \lambda^2_i = N\trace[\Lambdam^2].
\end{align}
For any joint constellation satisfying the identifiability condition in Proposition~\ref{prop:identifiability}, we have $\Xm\Xm^\H \ne \ \Xm'{\Xm'}^\H$, thus $\Lambdam \ne \mathbf{0}$. Therefore, $\trace(\Lambdam) - \log\det(\Id_T+ \Lambdam)$ is strictly larger than $0$. We have that
\begin{align}
\frac{\meanLLR^2}{\varLLR} &= N \frac{\big(\trace(\Lambdam) - \log\det(\Id_T+ \Lambdam)\big)^2}{\trace[\Lambdam^2]} \\
&\to \infty, \quad \text{as $N\to \infty$}.
\end{align}
From this and \eqref{eq:Cantelli}, we conclude that $\lim\limits_{N\to\infty}\P({\Xm}\to {\Xm'}) = 0$ for any pair of distinct symbols $\Xm$ and $\Xm'$ of a joint constellation satisfying the identifiability condition.

\section{Proof of Lemma~\ref{lem:meanLLR_scaling}} \label{proof:lem:meanLLR_scaling}
$\trace\left( \big( \Id_T	+ {\Xm'}{\Xm'}^\H \big)^{-1} \right) = O(1)$ is straightforwardly because the eigenvalues of $( \Id_T	+ {\Xm'}{\Xm'}^\H )^{-1}$ are all smaller than $1$. 

The input matrix $\Xm$ can be decomposed into an orthonormal matrix $\Wm \in \CC^{T\times (M_1+M_2)}$ whose columns span the column space of $\Xm$ and a full-rank spanning matrix $\Dm$. That is
$
\Xm = \Wm\Dm^\H, 
$
where $\|\Dm\|^2_F = \Theta(P)$.
Similarly, 
$
{\Xm'} = {\Wm'}{\Dm'}^\H,
$
for some orthonormal matrix ${\Wm'} \in \CC^{T\times (M_1+M_2)}$ and some full-rank  spanning matrices ${\Dm'}$ s.t. $\|{\Dm'}\|^2_F = \Theta(P)$. We assume without loss of generality~(w.l.o.g.) that the column subspaces of $\Xm$ and ${\Xm'}$ share $r\le M_1 + M_2$ eigenmodes and thus express $\Wm$ and ${\Wm'}$ as
\begin{align} \label{eq:tmp996}
\Wm = [\Um \ \Vm], \quad \text{and} \quad {\Wm'} = [\Um \ {\Vm'}],
\end{align}
with $\Um \in \CC^{T \times r}$, $\Vm \in \CC^{T \times (M_1+M_2-r)}$, and ${\Vm'} \in \CC^{T \times (M_1+M_2-r)}$ s.t. $\Um^\H\Um = \Id$, $\Vm^\H\Vm = \Id$, ${\Vm'}^\H{\Vm'} = \Id$, $\Um^\H\Vm = \mathbf{0}$, $\Um^\H{\Vm'} = \mathbf{0}$, and $\Vm^\H{\Vm'} = \mathbf{0}$.
In the following, $\sigma_i(\Mm)$ denotes the $i$-th eigenvalue of a matrix $\Mm$ in decreasing order.

\subsubsection{Proof that $\log\frac{\det\big( \Id_T+ {\Xm'}{\Xm'}^\H \big)}{\det\left( \Id_T+ \Xm  \Xm^\H \right)} = O(1)$ if $\Span(\Xm) = \Span({\Xm'})$ and $\Theta(\log P)$ otherwise}
The following lemma is useful for our proof.
\begin{lemma} \label{lemma:hermitian_pertubation}
	Consider $T\times T$ Hermitian matrices $\Am$ and $\Bm$ whose entries are functions of a parameter $P$. Assuming that $\|\Bm\|_F^2 = O(1)$ as $P\to \infty$, then 
	\begin{align} \label{eq:tmp1016}
	\sigma_i(\Am + \Bm) = \sigma_i(\Am) + O(1), \quad \forall i \in [T], P \to \infty.
	\end{align}
\end{lemma}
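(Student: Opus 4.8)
The plan is to invoke Weyl's perturbation inequality for the eigenvalues of Hermitian matrices, which directly controls the displacement of each eigenvalue under an additive Hermitian perturbation, and then to dominate the perturbation's spectral norm by its Frobenius norm.

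First I would recall Weyl's inequality: for Hermitian matrices $\Am, \Bm \in \CC^{T\times T}$ with eigenvalues indexed in decreasing order, one has
\[
\sigma_i(\Am) + \sigma_T(\Bm) \le \sigma_i(\Am+\Bm) \le \sigma_i(\Am) + \sigma_1(\Bm), \quad \forall\, i \in [T].
\]
Subtracting $\sigma_i(\Am)$ throughout yields the uniform two-sided bound
\[
\big| \sigma_i(\Am+\Bm) - \sigma_i(\Am) \big| \le \max\big\{ |\sigma_1(\Bm)|,\, |\sigma_T(\Bm)| \big\} = \max_{j\in[T]} |\sigma_j(\Bm)|,
\]
whose right-hand side is the spectral norm of the Hermitian matrix $\Bm$ and, crucially, does not depend on the index $i$.

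Next I would relate this spectral norm to the Frobenius norm appearing in the hypothesis. Since the largest singular value of any matrix is at most the square root of the sum of all its squared singular values, we have $\max_{j\in[T]} |\sigma_j(\Bm)| \le \|\Bm\|_F$. By assumption $\|\Bm\|_F^2 = O(1)$, hence $\|\Bm\|_F = O(1)$ as $P \to \infty$, and therefore $\big| \sigma_i(\Am+\Bm) - \sigma_i(\Am) \big| = O(1)$ uniformly over $i \in [T]$, which is exactly \eqref{eq:tmp1016}.

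There is essentially no obstacle here: the statement is a textbook consequence of Weyl's inequality combined with the elementary domination of the spectral norm by the Frobenius norm. The only points that warrant a moment of care are that the decreasing-order eigenvalue convention used is the one already fixed in the excerpt, and that the resulting bound is \emph{uniform} in $i$ --- both hold precisely because the bounding quantity on the right-hand side is index-free.
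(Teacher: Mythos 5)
Your proof is correct, but it rests on a different classical inequality than the paper's. The paper invokes the Hoffman--Wielandt theorem for Hermitian matrices, $\sum_{i=1}^{T}\big(\sigma_i(\Cm)-\sigma_i(\Dm)\big)^2 \le \|\Cm-\Dm\|_F^2$, which controls the entire vector of ordered-eigenvalue displacements in $\ell^2$; choosing $\Cm=\Am+\Bm$ and $\Dm=\Am$ (the paper actually writes ``$\Dm=\Bm$,'' an apparent typo, since one needs $\Cm-\Dm=\Bm$ for the argument to go through), each individual displacement is then at most $\|\Bm\|_F=O(1)$. You instead use Weyl's inequality, which bounds each displacement directly by the spectral norm $\max_j|\sigma_j(\Bm)|$, and then dominate that by $\|\Bm\|_F$. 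Both routes are one-line consequences of standard Hermitian perturbation theory and both give the required bound uniformly in $i$; your intermediate bound via the spectral norm is marginally sharper per eigenvalue, whereas the paper's $\ell^2$ statement is collectively stronger (it controls the sum of squared displacements, not just the maximum), though neither refinement matters for the $O(1)$ conclusion needed here. Your closing remarks on the eigenvalue-ordering convention and the index-free right-hand side are exactly the points that make the bound uniform, and they are handled correctly.
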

\begin{proof}
	From the Hoffman-Wielandt Theorem~\cite[Them.6.3.5]{Horn2012MatrixAnalysis}, we have that
	\begin{align} \label{eq:tmp1022}
	\sum_{i=1}^{T} (\sigma_i(\Cm) - \sigma_i(\Dm))^2 \le \|\Cm-\Dm\|_F^2,
	\end{align}
	for $T \times T$ Hermitian matrices $\Cm$ and $\Dm$. Then, \eqref{eq:tmp1016} follows by applying \eqref{eq:tmp1022} with $\Cm = \Am + \Bm$ and $\Dm = \Bm$.
\end{proof}
Let 
\begin{align}
\Gammam &= (\Id_T+{\Xm'}{\Xm'}^\H)(\Id_T+\Xm \Xm^\H)^{-1} \\
&= {\Xm'}{\Xm'}^\H(\Id_T+\Xm \Xm^\H)^{-1} + (\Id_T+\Xm \Xm^\H)^{-1}.
\end{align} 
Applying Lemma~\ref{lemma:hermitian_pertubation} with $\Am =  {\Xm'}{\Xm'}^\H(\Id_T+\Xm \Xm^\H)^{-1}$ and $\Bm = (\Id_T+\Xm \Xm^\H)^{-1}$, we have that
\begin{align}
&\!\!\!\!\!\!\!\sigma_i(\Gammam) = \sigma_i\big({\Xm'} {\Xm'}^\H(\Id_T+{\Xm}{\Xm}^\H)^{-1}\big) + O(1) \\
&\!\!\!\!\!\!\!=\!\begin{cases}
\sigma_i\big( {\Xm'}^\H(\Id_T\!+\!{\Xm}{\Xm}^\H)^{-1} {\Xm'}\big) \!+\! O(1), &i \le M_1 \!+\! M_2, \\
O(1), &i > M_1\!+\!M_2.
\end{cases} \label{eq:tmp929}
\end{align}
Recalling the decomposition $\Xm = \Wm \Dm^\H$, we have that ${\Xm}{\Xm}^\H = {\Wm} \Sigmam {\Wm}^\H$ with $\Sigmam \defeq \Dm^\H\Dm$. Let ${\Wm}_\perp$ be the orthonormal complement of ${\Wm}$, i.e., $[{\Wm} \ {\Wm}_\perp]$ is an unitary matrix. We can write that $\Id_T + {\Xm}{\Xm}^\H = {\Wm} (\Id_T +  {\Sigmam}) {\Wm}^\H + {\Wm}_\perp {\Wm}_\perp^\H$, and $(\Id_T+ {\Xm}{\Xm}^\H)^{-1} = {\Wm} \big(\Id_T +  {\Sigmam}\big)^{-1} {\Wm}^\H + {\Wm}_\perp {\Wm}_\perp^\H$. We can expand
\begin{align}
{\Xm'}^\H(\Id_T\!+\!{\Xm}{\Xm})^{-1} {\Xm'}^\H &= {\Xm'}^\H {\Wm} (\Id_T \!+\! {\Sigmam})^{-1} {\Wm}^\H {\Xm'} \notag \\
&\quad + {\Xm'}^\H {\Wm}_\perp {\Wm}_\perp^\H {\Xm'}. \label{eq:tmp1043}
\end{align}
Recalling that ${\Xm'} = {\Wm'} {\Dm'}^\H$ and using \eqref{eq:tmp996}, we have that 
\begin{align}
&{\Xm'}^\H {\Wm} (\Id_T +  {\Sigmam})^{-1} {\Wm}^\H \Xm' \notag \\
&\quad=  {\Dm'}^\H
\begin{bmatrix}
\Id_r & \mathbf{0} \\ \mathbf{0} & \mathbf{0}
\end{bmatrix}
(\Id_T + {\Sigmam})^{-1}
\begin{bmatrix}
\Id_r & \mathbf{0} \\ \mathbf{0} & \mathbf{0}
\end{bmatrix}
{\Dm'}\\
&\quad= {\Dm'}_1^\H  (\Id_r + {\Sigmam}_1)^{-1} \Dm'_1, 
\end{align}
where $\Dm'_1$ contains the first $r$ columns of ${\Dm'}$ and ${\Sigmam}_1$ denotes the top-left $r\times r$ block of ${\Sigmam}$, respectively. Therefore,
\begin{align}
\|{\Xm'}^\H {\Wm} (\Id_T + {\Sigmam})^{-1} {\Wm}^\H {\Xm'}\|^2_F &\le \frac{\|\Dm'_1\|_F^2}{1+{\sigma}_{\min}(\Sigmam_1)} = O(1),
\end{align}
where ${\sigma}_{\min}(\Sigmam_1)$ is the smallest eigenvalue of ${\Sigmam}_1$. 
With this, we apply Lemma~\ref{lemma:hermitian_pertubation} to \eqref{eq:tmp1043} and obtain $\sigma_i\big( {\Xm'}^\H(\Id_T+{\Xm}{\Xm}^\H)^{-1}{\Xm'}\big) = \sigma_i\big({\Xm'}^\H {\Wm}_\perp {\Wm}_\perp^\H {\Xm'} \big) + O(1)$. Plugging this in \eqref{eq:tmp929}, we get that
\begin{align}
\sigma_i(\Gammam) &=\begin{cases}
\sigma_i\big({\Xm'}^\H {\Wm}_\perp {\Wm}_\perp^\H {\Xm'} \big) + O(1), &\quad i \le M_1 + M_2, \\
O(1), &\quad i > M_1+M_2,
\end{cases} \\
&=
\begin{cases}
\Theta(P), &\quad i \le M_1 + M_2-r, \\
O(1), &\quad i > M_1+M_2-r.
\end{cases} 
\end{align}

If $\Span[\Xm] = \Span[{\Xm'}]$, we have that $r = M_1 + M_2$, thus $\sigma_i(\Gammam) = O(1)$ for all $i \in [T]$. Thus $\log\det(\Gammam) = O(1)$. Otherwise, $\sigma_i(\Gammam) = \Theta(P)$ for some $i$ and we have that $\log\det[\Id_T + \Gammam] = \sum_{i=1}^T \log \sigma_i = \Theta(\log P)$.

\subsubsection{Proof that $\trace\big( ( \Id_T+ {\Xm'}{\Xm'}^\H )^{-1} \Xm \Xm^\H \big) = O(1)$ if $\Span(\Xm) = \Span({\Xm'})$ and $\Theta(P)$ otherwise}
We expand
\begin{align}
&\Xm^\H(\Id+{\Xm'}{\Xm'}^\H)^{-1} \Xm \notag \\
&= \Xm^\H\left(\Id-{\Xm'}(\Id+{\Xm'}^\H{\Xm'})^{-1}{\Xm'}^\H\right) \Xm \\
&= \Dm \Wm^\H\Big(  \Id_T\!-\!{\Wm'}{\Dm'}^\H (\Id_T\!+\! {\Dm'} {\Wm'}^\H {\Wm'}{\Dm'}^\H)^{-1} {\Dm'}{\Wm'}^\H \Big) \Wm\Dm^\H \\
&= \Dm\Dm^\H - \Dm 
\bigg[\begin{matrix}
\Id_r & \mathbf{0} \\
\mathbf{0} & \mathbf{0}
\end{matrix}\bigg]
{\Dm'}^\H \big(\Id_T \!+\! {\Dm'} {\Dm'}^\H \big)^{-1} {\Dm'}
\bigg[\begin{matrix}
\Id_r & \mathbf{0} \\
\mathbf{0} & \mathbf{0}
\end{matrix}\bigg]
\Dm^\H \\
&= \Dm\Dm^\H - \begin{bmatrix}
P^2T^2 \Dm_1
{\Dm'}_1^\H \big(\Id_r + \Dm'_1 {\Dm'}_1^\H \big)^{-1} {\Dm'}_1
\Dm_1^\H 
& \mathbf{0} \\
\mathbf{0} & \mathbf{0}
\end{bmatrix}
\end{align}
where $\Dm_1$ and $\Dm'_1$ contain the first $r$ columns of $\Dm$ and ${\Dm'}$, respectively.
Thus,
\begin{align}
\lefteqn{\trace[(\Id+{\Xm'}{\Xm'}^\H)^{-1} \Xm \Xm^\H]} \notag \\ 
&= \trace[{\Dm}_2 {\Dm}_2^\H] \notag \\
&\quad+ \trace[\Dm_1\Dm_1^\H - \Dm_1
{\Dm'}_1^\H \big(\Id_r + \Dm'_1 {\Dm'}_1^\H \big)^{-1} \Dm'_1
\Dm_1^\H] \\
&= \|{\Dm}_2\|_F^2 + \trace[ \big(\Id_r + \Dm'_1 {\Dm'}_1^\H\big)^{-1} \Dm_1^\H\Dm_1],
\end{align}
where $\Dm_2$ contains the last $M_1+M_2-r$ columns of $\Dm$. Since $\big(\Id_r + \Dm'_1 {\Dm'}_1^\H\big)^{-1} \preceq (1+\sigma_{\min}({\Dm'}_1^\H \Dm'_1))^{-1}\Id$ where $\sigma_{\min}({\Dm'}_1^\H \Dm'_1)$ is the smallest eigenvalue of ${\Dm'}_1^\H \Dm'_1$, we have that 
\begin{align}
\|{\Dm}_2\|_F^2 &\le \trace[(\Id+{\Xm'}{\Xm'}^\H)^{-1} \Xm \Xm^\H] \\
&\le \|{\Dm}_2\|_F^2 + \frac{\|{\Dm}_1\|_F^2}{1+\sigma_{\min}({\Dm'}_1^\H \Dm'_1)}.
\end{align}
If $\Span[\Xm] = \Span\big({\Xm'}\big)$, we have  $r=M_1 + M_2$ and thus $\Dm_2$ is an empty matrix. Therefore, 
\begin{equation}
\trace[(\Id\!+\!{\Xm'}{\Xm'}^\H)^{-1} \Xm \Xm^\H] \le \frac{\|{\Dm}_1\|_F^2}{1+\sigma_{\min}({\Dm'}_1^\H \Dm'_1)} = O(1).
\end{equation}
Otherwise, $r<M_1+M_2$ and $\|{\Dm}_2\|_F^2 = \Theta(P)$, thus $\trace[(\Id+{\Xm'}{\Xm'}^\H)^{-1} \Xm \Xm^\H] = \Theta(P)$. 

\section{Proof of Proposition~\ref{prop:sufficient}} \label{proof:sufficient}
Due to the symmetry, it is enough to focus on $d_{12}(\Xcal)$. 
Let us rewrite $\Xm'_1
{\Xm'}_1^\H + {\Xm}_2 {\Xm}_2^\H = \Xm_{12} \Xm_{12}^\H$ where
$\Xm_{12} := \left[\Xm'_1 \ \Xm_2\right] \in \Xcal$. Then, the trace in \eqref{eq:d12}
becomes
\begin{align}
&\trace\Big(\Xm_1^\H \big(\Id_T+ {\Xm}_{12} {\Xm}_{12}^\H \big)^{-1} \Xm_1
	\Big)\notag\\ 
&= \trace\left(\Xm_1^\H \Xm_1\right) - \trace \Big(\Xm_1^\H{\Xm}_{12} \big(\Id_T+ {\Xm}_{12}^\H {\Xm}_{12} \big)^{-1} {\Xm}_{12}^\H \Xm_1 \Big)  
\\
&= PT- \trace \big(\Xm_1^\H \Um
\Sigmam \big(\Id_T+ \Sigmam^2 \big)^{-1} \Sigmam \Um^\H
\Xm_1  \big), 
\end{align}%
where $\Xm_{12} = \Um \Sigmam \Vm^\H$ with $\Um\in\mathbb{C}^{r\times T}$, $\Vm\in\mathbb{C}^{2M\times	r}$ being orthogonal matrices, and
$r$ being the rank of $\Xm_{12}$; $\Sigmam$ is a diagonal matrix with
the $r$ singular values of $\Xm_{12}$ in decreasing order. 
Then, since $\big(\Id_T+ \Sigmam^2 \big)^{-1} \preceq
(1+\sigma^2_{\min}(\Xm_{12}))^{-1} \Id$ with $\sigma^2_{\min}(\Xm_{12})$ being the
minimum non-zero singular value of $\Xm_{12}$, we have
\begin{align}
&\!\!\!\!\!\!\!\!\!\trace\Big(\Xm_1^\H \big(\Id_T+ {\Xm}_{12} {\Xm}_{12}^\H \big)^{-1} \Xm_1
	\Big)\notag\\ 
&\!\!\!\!\!\!\!\!\!\ge PT - \big(1+\sigma^2_{\min}(\Xm_{12})\big)^{-1} \trace \left( \Xm_1^\H \Um \Sigmam  \Sigmam \Um^\H \Xm_1  \right) \\
&\!\!\!\!\!\!\!\!\!= PT- \big(1+\sigma^2_{\min}(\Xm_{12})\big)^{-1} \left\| \Xm_{12}^\H
\Xm_1 \right\|_F^2 \\
&\!\!\!\!\!\!\!\!\!= PT - \big(1\!+\!\sigma^2_{\min}(\Xm_{12})\big)^{-1}
\big(\| {\Xm'}_1^\H \Xm_1 \|_F^2 \!+\!
\| \Xm_2^\H \Xm_1 \|_F^2 \big). \label{eq:tmp922} 
\end{align}%

  From
  \eqref{eq:tmp922}, the key is to find a lower bound on the non-zero singular value
  $\sigma_{\min}(\Xm_{12})$. The following lemma is useful for that
  purpose.
\begin{lemma} \label{lem:min_singular_value}
  Let $\Qm := \left[\begin{smallmatrix} \Id_m & \Am_{m\times n} \\
    \Am_{m\times n}^\H &
    \Id_n \end{smallmatrix}\right]$
    be positive semidefinite. Then, the $m+n$ eigenvalues of $\Qm$ are 
    \begin{multline}
      1+\sigma_1(\Am), \ldots, 1+\sigma_{\min\{m,n\}}(\Am), 1, \ldots, 1, \\ 
      1-\sigma_{\min\{m,n\}}(\Am), \ldots, 1-\sigma_1(\Am). 
    \end{multline}%
\end{lemma}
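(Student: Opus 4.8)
The plan is to reduce $\Qm$ to a simple sparse form by the singular value decomposition of $\Am$ together with a unitary similarity. Write $\Am = \Um \Sigmam \Vm^\H$ with $\Um \in \CC^{m \times m}$ and $\Vm \in \CC^{n \times n}$ unitary and $\Sigmam \in \CC^{m \times n}$ carrying the singular values $\sigma_1(\Am) \ge \cdots \ge \sigma_{\min\{m,n\}}(\Am) \ge 0$ on its leading diagonal. Since the block-diagonal matrix $\Wm := \left[\begin{smallmatrix} \Um & \mathbf{0} \\ \mathbf{0} & \Vm \end{smallmatrix}\right]$ is unitary, conjugation by $\Wm$ preserves the spectrum, and using $\Um^\H \Am \Vm = \Sigmam$ one finds
\[ \Wm^\H \Qm \Wm = \begin{bmatrix} \Id_m & \Sigmam \\ \Sigmam^\H & \Id_n \end{bmatrix}. \]
It therefore suffices to compute the eigenvalues of this reduced matrix.

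First I would observe that the reduced matrix couples only coordinate $i$ of the first block to coordinate $i$ of the second block, through the entry $\sigma_i(\Am)$. Applying the symmetric permutation (again a unitary similarity) that pairs up these coordinates block-diagonalizes it into $\min\{m,n\}$ blocks $\left[\begin{smallmatrix} 1 & \sigma_i(\Am) \\ \sigma_i(\Am) & 1 \end{smallmatrix}\right]$ together with $|m-n|$ trailing $1 \times 1$ blocks equal to $1$, the latter arising from the unpaired coordinates of the larger block. Each $2 \times 2$ block has eigenvalues $1 \pm \sigma_i(\Am)$ and each singleton contributes a $1$, which is exactly the claimed list; the count $2\min\{m,n\} + |m-n| = m+n$ confirms that all eigenvalues have been accounted for.

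There is no essential obstacle here: the argument is a routine SVD computation, and the only care needed is bookkeeping in the rectangular case $m \ne n$, where $\Sigmam$ is non-square and its empty columns (or rows) are precisely what produce the $|m-n|$ extra unit eigenvalues. I also note that positive semidefiniteness of $\Qm$ is not needed for the eigenvalue formula itself; it only enforces $\sigma_1(\Am) \le 1$, ensuring that the smallest listed eigenvalue $1 - \sigma_1(\Am)$ is nonnegative and so consistent with the hypothesis. An equivalent route, if one prefers to avoid the permutation, is to evaluate $\det(\Qm - \lambda \Id)$ for $\lambda \ne 1$ via the Schur complement, obtaining $(1-\lambda)^{m-n} \det\big((1-\lambda)^2 \Id_n - \Am^\H \Am\big)$, whose roots are again $\lambda = 1 \pm \sigma_i(\Am)$, with the remaining unit eigenvalues accounted for by the prefactor and by the zero eigenvalues of $\Am^\H \Am$.
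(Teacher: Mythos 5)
Your proof is correct and takes essentially the same route as the paper's: the paper likewise notes that the singular value decomposition of $\Am$ yields a block diagonalization of $\Qm$ into $2\times 2$ blocks $\left[\begin{smallmatrix} 1 & \sigma_i(\Am) \\ \sigma_i(\Am) & 1 \end{smallmatrix}\right]$ (plus unit singletons), giving eigenvalues $1\pm\sigma_i(\Am)$. You merely make explicit the unitary similarity by $\mathrm{diag}(\Um,\Vm)$, the pairing permutation, and the $|m-n|$ bookkeeping in the rectangular case, all of which the paper's one-line proof leaves implicit.
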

\begin{proof}
  The singular value decomposition of $\Am$ leads to a block
  diagonalization of $\Qm$ with $2\times 2$ blocks. The result then
  follows immediately. 
\end{proof}
Applying Lemma~\ref{lem:min_singular_value} to the matrix $\Qm = \frac{M}{PT} \Xm_{12}^\H
\Xm_{12}$ with $\Am = \frac{M}{PT} {\Xm'}_1^\H \Xm_2$, we see that
the minimum non-zero eigenvalues of $\Qm$ is $1-\sigma_k(\frac{M}{PT}
{\Xm'}_1^\H \Xm_2)$ if there exists at least one singular value of $\frac{M}{PT}
{\Xm'}_1^\H \Xm_2$
strictly smaller than $1$ and $\sigma_k(\frac{M}{PT}
{\Xm'}_1^\H \Xm_2)$ is the largest among such
values. Otherwise, if all singular values of $\frac{M}{PT} {\Xm'}_1^\H
\Xm_2$ are $1$, the minimum non-zero eigenvalue of $\Qm$ is two. In
any case, the minimum non-zero eigenvalue of $\Qm$ is lower
bounded
by 
$
1-\|
\frac{M}{PT} \Xm'_1 \Xm_2^\H \| \ge 1-M\sqrt{c}. 
$
Hence, $\sigma^2_{\min}(\Xm_{12}) \ge PT\big(\frac{1}{M} - \sqrt{c}\big)$. Plugging this into \eqref{eq:tmp922} yields \eqref{eq:suff2}.

\section{The Constellation Numerical Optimization} \label{app:gradient}
Recall the (approximate) constellation optimization
\begin{align}
\min_{\Xc  = \Xc_1 \times \Xc_2} \epsilon \log \sum_{\Xm \ne {\Xm'} \in \Xc} \exp \Big(-\frac{f(\Xm, {\Xm'})}{\epsilon}\Big), \label{eq:tmp1102}
\end{align}
where 
$
\Xc_k = \{\sqrt{PT}\cv_{ki}\}_{i=1}^{|\Xc_k|} \subset G(\CC^T,1)$ with $\cv_{k1}, \cv_{k2}, \dots, \cv_{k|\Xc_k|}$ being unit-norm vectors, $k \in \{1,2\}$, 
a joint symbol $\Xm \in \Xc$ is formed as $\Xm = [\xv_1 \ \xv_2]$ for $\xv_1 \in \Xc_1$ and $\xv_2 \in \Xc_2$, 
and 
$f(\Xm, {\Xm'})$ is customized according to the considered criterion. This smooth optimization is, however, jointly over multiple points on the Grassmannian of lines. To tackle this, we construct the matrix $\Cm \defeq [\cv_{11} \dots \cv_{1|\Xc_1|} \ \cv_{21} \dots \cv_{2|\Xc_2|}] \in \CC^{T\times (|\Xc_1|+|\Xc_2|)}$, then $\Cm$ belongs to the oblique manifold  $\Oc\Bc(T,|\Xc_1|+|\Xc_2|)$ defined as 
\begin{multline}
\Oc\Bc(n,m) \defeq \\\big\{ \Mm = [\vv_1 \dots \vv_m]\in \CC^{n\times m}: \|\vv_1\| = \dots = \|\vv_m\| = 1\big\}.
\end{multline}
The oblique manifold $\Oc\Bc(n,m)$ can be seen as an embedded Riemannian manifold of $\CC^{n\times m}$
endowed with the usual inner product, or as the product manifold of $m$ unit spheres in $\CC^T$. Then, the optimization problem \eqref{eq:tmp1102} can be reformulated as a single-variable optimization on this oblique manifold as
\begin{equation}
\min_{\Cm \in \Oc\Bc(T,|\Xc_1|+|\Xc_2|)} \ \underbrace{\epsilon \log\sum_{\Xm = \sqrt{PT}[\cv_{1i}\ \cv_{2l}] \atop \ne {\Xm'} = \sqrt{PT}[\cv_{1j}\ \cv_{2m}]} \exp \Big(-\frac{f(\Xm, {\Xm'})}{\epsilon}\Big)}_{\large \eqdef g(\Cm)}. \label{eq:opt_oblique}
\end{equation}
To solve this, we need to compute the Riemannian gradient of the function $g(\Cm)$ on the manifold. According to~\cite[Sec.3.6]{AbsMahSep2008optManifolds}, 
the Riemannian gradient can be computed by projection as
\begin{align}
\nabla_R g(\Cm) = (\Id_T - \Cm\Cm^\H) \nabla_E g(\Cm), \label{eq:Riemannian_gradient}
\end{align}
where $\nabla_E g(\Cm)$ is the Euclidean gradient of $g(\Cm)$, which is given by $\Big[\frac{\partial g(\Cm)}{\partial \cv_{11}} \ \dots \ \frac{\partial g(\Cm)}{\partial \cv_{1|\Xc_1|}} \ \frac{\partial g(\Cm)}{\partial \cv_{21}} \ \dots \ \frac{\partial g(\Cm)}{\partial \cv_{2|\Xc_2|}}\Big]$ with
\begin{align}
\frac{\partial g(\Cm)}{\partial \cv_{kn}} &= - \Bigg(\sum_{\Xm \ne {\Xm'} \in \Xc} \exp \Big(\!-\frac{f(\Xm, {\Xm'})}{\epsilon}\Big) \Bigg)^{-1} \notag \\
&~ \times \! \sum_{\substack{\Xm = \sqrt{PT}[\cv_{1i}\ \cv_{2l}] \\ \ne {\Xm'} = \sqrt{PT}[\cv_{1j}\ \cv_{2m}], \\ kn \in \{1i,1j,2l,2m\}}} \!\!\exp \Big(\!-\frac{f(\Xm, {\Xm'})}{\epsilon}\Big) \frac{\partial f(\Xm,{\Xm'})}{\partial \cv_{kn}}.
\end{align}
In our proposed criteria, $f(\Xm,{\Xm'})$ is given by
\begin{equation}
\!\!f(\Xm, {\Xm'}) \!\defeq\! \begin{cases}
\frac{1}{N} \meanLLR,\! &\text{for the criterion~\eqref{eq:criterion_minMean},} \\
\trace\big((\Id_T\!+\!{\Xm'}{\Xm'}^\H)^{-1}\!\Xm \Xm^\H\big),\! &\text{for the criterion~\eqref{eq:criterion_minTrace}}. 
\end{cases} 
\label{eq:f(X,hatX)}
\end{equation}
Essentially, we would like to compute the derivative of $d(\Xm \to {\Xm'}) =  \trace[\big(\Id_T+{\Xm'}{\Xm'}^\H\big)^{-1}\Xm \Xm^\H]$ (the derivative of $\trace[\big(\Id_T+{\Xm'}{\Xm'}^\H\big)^{-1}]$ is similar) and $\psi(\Xm, {\Xm'}) \defeq \log\frac{\det\big( \Id_T+ {\Xm'}
	{\Xm'}^\H \big)}{\det\left( \Id_T+ \Xm \Xm^\H
	\right)}$.
 With $\Xm = \sqrt{PT}[\cv_{1i}\ \cv_{2l}]$ and ${\Xm'} = \sqrt{PT}[\cv_{1j}\ \cv_{2m}]$, after some manipulations, we have that:
	\begin{align}
	\frac{\partial d(\Xm \!\to\! {\Xm'})}{\partial \cv_{1n}} &= \begin{cases}
	2\Big(\frac{1}{PT} \Id_T \!+\! \cv_{1j}\cv_{1j}^\H \!+\! \cv_{1n}\cv_{1n}^\H\Big)^{-1} \cv_{1n}, ~\text{if~} n = i, \\
	2\Big(\frac{\cv_{1n}^\H \Am^{-1}(\cv_{1i}\cv_{1i}^\H + \cv_{2l}\cv_{2l}^\H)\Am^{-1}\cv_{1n}(\Id_T+\Am^{-1})}{(1+\cv_{1n}^\H\Am^{-1}\cv_{1n})^2} \Big.\\
	\quad \Big.- \frac{\Am^{-1}(\cv_{1i}\cv_{1i}^\H + \cv_{2l}\cv_{2l}^\H)\Am^{-1}}{1+\cv_{1n}^\H\Am^{-1}\cv_{1n}}\Big) \cv_{1n}, \\ 
	\qquad\text{~with~} \Am \defeq \frac{1}{PT} \Id_T + \cv_{2m}\cv_{2m}^\H, ~\text{if~} n = j.
	\end{cases} \\ 
	\frac{\partial d(\Xm \!\to\!{\Xm'})}{\partial \cv_{2n}} &= \begin{cases}
	2\Big(\frac{1}{PT} \Id_T \!+\! \cv_{2m}\cv_{2m}^\H \!+\! \cv_{2n}\cv_{2n}^\H\Big)^{-1} \cv_{2n}, ~\text{if~} n = l, \\
	2\Big(\frac{\cv_{2n}^\H \Bm^{-1}(\cv_{2l}\cv_{2l}^\H + \cv_{1i}\cv_{1i}^\H)\Bm^{-1}\cv_{2n}(\Id_T+\Bm^{-1})}{(1+\cv_{2n}^\H\Bm^{-1}\cv_{2n})^2} \Big. \\  \quad \Big. -\frac{\Bm^{-1}(\cv_{2l}\cv_{2l}^\H + \cv_{1i}\cv_{1i}^\H)\Bm^{-1}}{1+\cv_{2n}^\H\Bm^{-1}\cv_{2n}}\Big) \cv_{2n}, \\ \qquad\text{~with~} \Bm \defeq \frac{1}{PT} \Id_T + \cv_{1j}\cv_{1j}^\H, ~\text{if~} n = m.
	\end{cases}
	\end{align}
	and $\psi(\Xm, {\Xm'}) = \log \frac{Q-|\cv_{1i}^\H\cv_{2l}|^2}{Q-|\cv_{1j}^\H\cv_{2m}|^2}$ with $Q\defeq \big(1+\frac{1}{PT}\big)^2$,
	\begin{align}
	\frac{\partial \psi(\Xm,{\Xm'})}{\partial \cv_{1n}} &= \begin{cases}
	2 \frac{\cv_{2l}\cv_{2l}^\H}{Q-|\cv_{1n}^\H\cv_{2l}|^2} \cv_{1n}, & \text{if~} n = i, \\
	2 \frac{\cv_{2m}\cv_{2m}^\H}{Q-|\cv_{1n}^\H\cv_{2m}|^2} \cv_{1n}, & \text{if~} n = j.
	\end{cases} \\ 
	\frac{\partial \psi(\Xm,{\Xm'})}{\partial \cv_{2n}} &= \begin{cases}
	2 \frac{\cv_{1i}\cv_{1i}^\H}{Q-|\cv_{1i}^\H\cv_{2n}|^2} \cv_{2n}, & \text{if~} n = l, \\
	2 \frac{\cv_{1j}\cv_{1j}^\H}{Q-|\cv_{1j}^\H\cv_{2n}|^2} \cv_{2n}, & \text{if~} n = m.
	\end{cases} 
	\end{align}
	
For $\Xcal_1 = \arg\displaystyle\max_{\Xcal_1} d_{12}(\Xcal)$ and $\Xcal_2 = \arg\displaystyle\max_{\Xcal_2} d_{21}(\Xcal)$ in the alternating optimization, the gradients are computed in a similar way to $d(\Xm \!\to\! {\Xm'})$.

With the gradient computed, we employ the manopt toolbox~\cite{manopt} to solve the optimization with gradient descent on the manifold.
\end{document}